\newtheorem{lemma}{Lemma}
\newtheorem{theorem}{Theorem}
\newenvironment{restatetheorem}[1]
{\begingroup

\begin{theorem}}
{\end{theorem}
\addtocounter{theorem}{-1}
\endgroup}
\title{Cities at Play:\\ Improving Equilibria in Urban Neighbourhood Games}
\author
{
Martin Gairing\thanks{University of Liverpool: {\tt gairing@liverpool.ac.uk}}
\and Adrian Vetta\thanks{McGill University: {\tt adrian.vetta@mcgill.ca}}
\and Zhanzhan Zhao\thanks{Chinese University of Hong Kong, Shenzhen: {\tt zhanzhanzhao@cuhk.edu.cn}}
}
\begin{document}

\maketitle

\renewcommand{\thefootnote}{\arabic{footnote}}

\begin{abstract}
How should cities invest to improve social welfare when individuals respond strategically to local conditions? We model this question using a game-theoretic version of Schelling’s bounded neighbourhood model, where agents choose neighbourhoods based on concave, non-monotonic utility functions reflecting local population. While naïve improvements may worsen outcomes — analogous to Braess’ paradox — we show that carefully designed, small-scale investments can reliably align individual incentives with societal goals. Specifically, modifying utilities at a total cost of at most 
$0.81 \epsilon^2 \cdot \texttt{opt}$
guarantees that every resulting Nash equilibrium achieves a social welfare of at least 
$\epsilon \cdot \texttt{opt}$, where $\texttt{opt}$ is the optimum social welfare. Our results formalise how targeted interventions can transform supra-negative outcomes into supra-positive returns, offering new insights into strategic urban planning and decentralised collective behaviour.
\end{abstract}

\bibliographystyle{plain}

\section{Introduction}\label{sec:intro}

How, when, and where should society invest its resources to most effectively improve the utility of its citizens?
We investigate this question in the context of urban planning, focusing on neighbourhood improvement. 
For example, suppose a metropolis has a development budget of \$100 million to invest, say, in roadworks, playgrounds, parks, etc. Classical economics predicts the potential social benefit of this investment to range from
zero benefit (i.e., \$0 with the money wasted and better returned to taxpayers) to full benefit (i.e., \$100m and, possibly, slightly greater given a moderate multiplier effect).
Surprisingly, however, this massively understates the range of
outcomes that can arise from this investment. The reason being that it ignores the distinction between the macro-motives of the urban planner and the micro-motives of individual citizens. In particular, 
personal incentives can cause individuals to react to investments 
in ways that have positive or negative effects on the utilities of other citizens. In our setting, the consequent micro-behaviour 
of interest is the act of moving neighbourhoods,
and we quantify the effects of this using a standard variant of Schelling's model of segregation \cite{schelling1971dynamic}.

Schelling \cite{schelling1971dynamic} explores a dynamic model with two types of agents occupying nodes in a graph and having preferences on the composition of agent types in their neighbourhood. His work demonstrates how seemingly minor individual preferences for neighbours of their own kind can lead to extreme segregation, even when individuals would prefer some integration. Schelling's model applies to general graphs, but also explores the bounded-neighbourhood model where the city is divided into blocks and agents decide to enter or exit a block based on the overall demographic composition in the blocks. The bounded-neighbourhood model has been studied more widely (see e.g. \cite{gauvin2009phase,grauwin2009competition}), in particular Grauwin et al.~\cite{grauwin2009competition} introduced a simplification of Schelling's model where there is only one type of agent and agent utilities depend on the density of agents in their block. The authors show how results for their single population model can be readily extended to the two populations model of Schelling. The main contribution is
the analytical prediction of stationary states, which builds on the free-energy function used in physics and omits previously required numerical simulations. 

In this paper, we look at the same model as Grauwin et al.~\cite{grauwin2009competition} but from a game-theoretic perspective. Our study shows how to direct macroscopic investments in order to optimize social welfare in equilibrium, i.e., we align micro-motives with the social welfare objective.
One striking illustration of the negative effects of micro-motives
concerns Braess' Paradox~\cite{Bra68} where the construction of a new road in a city can make everyone's travel times increase!
This phenomenon arises because the new road can induce bottlenecks in the road network. Braess' paradox has been widely observed in practice.
Indeed, for a theoretic perspective it is potentially ``as common as not''~\cite{SZ83}.
Conversely, it is commonplace for what can be viewed as {\em disinvestments} (e.g. the closing of a road, traffic calming measures, road tolls, etc.) to increase traffic speeds for everyone.

On the other hand, micro-behaviours can produce disproportionate positive effects.
A celebrated example concerns the {\em Bilbao effect}~\cite{Plaza06, Plaza07}. 
The economic benefit to the local economy of the Guggenheim Museum in Bilbao is estimated to be \$650m annually. As a result, the micro-behaviours related to tourism and economic regeneration greatly exceeded the \$100m construction cost of the museum.

Thus, the set of outcomes achievable after a neighbourhood investment range from supra-negative to supra-positive.
We will present examples of both these situations in Section~\ref{sec:model} after we have formally defined our urban planning model.

The purpose of this paper is to ascertain how and when supra-positive investment returns are achievable. Our conclusion is quite startling: {\em either the current neighbourhood plan already proffers social welfare comparable to the optimal social welfare achievable at that time or there exist
targeted neighbourhood investments that will produce supra-positive returns}.

In order to formalise and quantify this conclusion, 
we utilise a generalised version of Schelling's bounded neighbourhood model~\cite{schelling1971dynamic}. 
We present the model in Section~\ref{sec:model} and
explain with examples the game theoretic tools
we will use to analyse it.
Then, in Section~\ref{sec:background}, we discuss related literature.
In Section~\ref{sec:upper}, we prove our main result, Theorem~\ref{thm:main}, that supra-positive investment returns always exist if the current neighbour equilibrium has low welfare in comparison to the optimal social welfare achievable. In Section~\ref{sec:lower}, we prove a complementary result, Theorem~\ref{thm:minor}, which states that the investment return guarantees given in Theorem~\ref{thm:main} are essentially the best possible, in that they are tight within a small multiplicative factor. 
We conclude in Section~\ref{sec:conc}.

\section{A Game Theoretic Model of Urban Planning}\label{sec:model}

For our urban planning model, we use a generalised version of the {\em bounded neighbourhood model} of Schelling~\cite{schelling1971dynamic}. There are $n$ neighbourhoods in a metropolis, 
where each neighbourhood $i$ has a maximum population capacity of $u_i$ and a population of $x_i\in [0, u_i]$. Thus, the total population of the metropolis is $N=\sum_{i=1}^n x_i$.
For the sake of a cleaner presentation, we use a \emph{non-atomic} model for the population, where the population consists of a continuum of infinitely many individuals, each having only an infinitesimal contribution to the population. This is a good approximation for our setting as populations are usually large and each individual has only a tiny impact on the utilities. Moreover, we stress that all our results and proofs also go through (mutatis mutandis) for the atomic version with finitely many individuals. 

Two remarks are in order here. First, the original model of Schelling~\cite{schelling1971dynamic} had a population with two races 
of population densities $d_i$ and $1-d_i$ in neighbourhood $i$, respectively.
For the purposes of analysis, this is equivalent to our setting; to see this,
let the population $x_i= d_i\cdot u_i$ correspond to the population of the first race and let the spare capacity $u_i-x_i= (1-d_i)\cdot u_i$ correspond to the population of the second race.  Second, our motivation applies at the scale of neighbourhoods in a metropolis. But our results apply at an arbitrary scale, ranging from blocks in a small town to states in a large country.

Now recall our interest lies in micro-motives, specifically the motivations
and actions of individual citizens. To study this, we assume that there is a possibly distinct utility for living in each neighbourhood $i$, which naturally depends on the total population (equivalently, population density) in that neighbourhood. Formally,
we have utility functions $f_1,f_2,\dots, f_n$ each on $[0,u_i]$. Here $f_i(x_i)$ denotes the utility of each individual living in neighbourhood $i$ when the population at that location is $x_i\in [0,u_i]$. Each function $f_i$ is assumed to be non-negative and concave, but importantly need not be monotonic. 

We remark it is very important to allow for non-monotonic utility functions.
In particular, a standard assumption is that the utility function is increasing at low population densities but decreasing at high densities (see e.g. \cite{BT08,PV07,Zhang04}).
This is due to negative externalities that arise both from under-population 
and from overcrowding.
For example, as the total population/population density gets too large then the negative consequences of overcrowding may include
excessive traffic, pollution, noise, crime, etc. as well as increased living costs resulting from excess demand in the neighbourhood.
At the other extreme, a population that is too small may be
insufficient to support necessary public services such as  transportation, schools, health services, etc. nor 
private enterprises such as restaurants, shops, cafes, sports facilities, etc.


\subsection{Nash Equilibria}\label{sec:NE}
Of course each individual can choose which neighbourhood to live in.
This induces a game with a dynamic in which individuals may sequentially 
move neighbourhoods. In such a game an equilibrium state is a Nash equilibrium, where no individual can improve its utility by moving neighbourhoods. 

Let's illustrate this with an example. In Figure~\ref{fig:NE} there
are $n=2$ neighbourhoods; we call the first neighbourhood ``red'' and the second neighbourhood ``blue''. In this example, we assume that the neighbourhood capacities are equal. Moreover, we set $u_1=u_2=1$. Furthermore, the utility functions, $f_1$ and $f_2$,
are identical with a utility peak when the population in the neighbourhood is~$\frac12$. 
Finally, we assume that the total population of the metropolis is $N=1$.
\begin{figure}[h!]
\centering
\newcommand{\dd}{0.04}
\begin{tikzpicture}[scale=0.77,xscale=5,yscale=4,domain=0:1]
  \draw[->] (-0.02,0) -- (1.05,0) node[below right] {$x_1$};
  \draw[->] (0,-0.02) -- (0,1.08+2*\dd) node[above] {$f(x_1)$};

  \draw (0,0.02) -- (0,-0.02) node[below] {$0$};
  \draw (0.5,0.02) -- (0.5,-0.02) node[below] {$\frac{1}{2}$};
  \draw (1,0.02) -- (1,-0.02) node[below] {$1$};
  
  \draw[color=red, ultra thick]   (0,\dd) -- (0.5,1);
  \draw[color=red, ultra thick]   (0.5,1) -- (1,0);
  \draw (0.02,\dd) -- (-0.02, \dd) node[left]{$\delta$};
  \draw (0.02,1) -- (-0.02, 1) node[left]{1};
  
  \draw (0.02,\dd) -- (-0.02, \dd) node[left]{$\delta$};

  \draw[thin, dashed, color=gray] (0.5,-0.02) -- (0.5,1.02+2*\dd);
  \draw[thin, dashed, color=gray] (-0.02,1) -- (1.02,1); 
\end{tikzpicture}
\quad
\begin{tikzpicture}[scale=0.77,xscale=5,yscale=4,domain=0:1]
  \draw[->] (-0.02,0) -- (1.05,0) node[below right] {$x_2$};
  \draw[->] (0,-0.02) -- (0,1.08+2*\dd) node[above] {$f(x_2)$};

  \draw (0,0.02) -- (0,-0.02) node[below] {$0$};
  \draw (0.5,0.02) -- (0.5,-0.02) node[below] {$\frac{1}{2}$};
  \draw (1,0.02) -- (1,-0.02) node[below] {$1$};
  
  \draw[color=blue, ultra thick]   (0,\dd) -- (0.5,1);
  \draw[color=blue, ultra thick]   (0.5,1) -- (1,0);
  \draw (0.02,\dd) -- (-0.02, \dd) node[left]{$\delta$};
  \draw (0.02,1) -- (-0.02, 1) node[left]{1};

  \draw (0.02,\dd) -- (-0.02, \dd) node[left]{$\delta$};

  \draw[thin, dashed, color=gray] (0.5,-0.02) -- (0.5,1.02+2*\dd);
  \draw[thin, dashed, color=gray] (1,-0.02) -- (1,1.02+2*\dd);
  \draw[thin, dashed, color=gray] (-0.02,1) -- (1.02,1); 
\end{tikzpicture}
\caption{An instance with an equilibrium of high social welfare.}\label{fig:NE}
\end{figure}
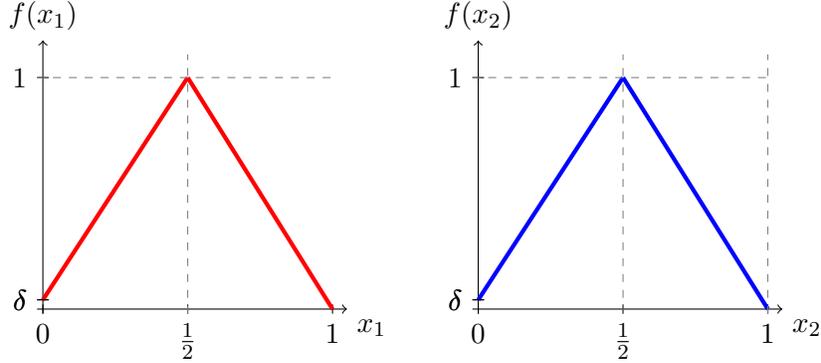

In this example, the Nash equilibrium is then $(x_1,x_2)=(\frac12,\frac12)$.
This is a feasible solution because $N=1=\sum_{i=1}^n x_i = \frac12 + \frac12$. Moreover, this is a Nash equilibrium because every agent receives a utility $1$ which is the maximum possible.
For example, an agent in the red neighbourhood cannot increase its utility by moving to the blue neighbourhood, and vice versa.
In fact, this Nash equilibrium gives the optimal achievable social welfare for this specific example.

\subsection{Social Welfare and the Price of Anarchy}\label{sec:PoA}
Unfortunately this is atypical. Equilibrium/stable states
in these urban planning games need not proffer high welfare for the 
agents. To illustrate this, we need the following definitions.
The {\em social welfare} when the populations are ${\bf x}=\{x_1, x_2,\dots, x_n\}$ is ${\tt wel}({\bf x})=\sum_{i=1}^n x_i\cdot f_i(x_i)$. 
The optimum achievable social welfare is denoted
${\tt opt} = \max\limits_{\bf x\in \mathcal{F}} \, {\tt wel}({\bf x})$, where
$\mathcal{F}$ denotes the set of all feasible neighbourhood population arrangements for the given urban planning instance.

Figure~\ref{fig:PoA} gives an example where the social welfare of the equilibrium state is very low in comparison to the optimal social welfare.
Notice that blue neighbourhood is now very slightly more attractive than the
red neighbourhood. Indeed the utility function $f_2$ now goes through the three points $(0, 3\delta)$, $(\frac12, 1 +2\delta)$ and $(1, 2\delta)$ rather than $(0, \delta)$, $(\frac12, 1)$ and $(1,0)$. For small~$\delta$ this example is nearly identical to that in Figure~\ref{fig:NE}, and the optimal solution remains
${\bf x}^*= (x^*_1,x^*_2)=(\frac12,\frac12)$ with social welfare ${\tt opt} = {\tt wel}({\bf x}^*)=\sum_{i=1}^n x^*_i\cdot f_i(x^*_i)
= \frac12\cdot 1 + \frac12\cdot (1+2\delta) = 1+\delta$.
However, the Nash equilibrium outcome is now dramatically different. 
To understand this, observe that at $(x^*_1,x^*_2)=(\frac12,\frac12)$
an agent in the red neighbourhood receives a utility of $1$ but can
increase its utility to (just under) $1+2\delta$ by moving to the blue neighbourhood. Moreover this behaviour is self-sustaining and the dynamics
cause every agent to move from the red neighbourhood to the blue neighbourhood.
Specifically, assume the current solution is $(x_1, x_2)=(\xi, 1-\xi)$.
Then an agent in the red neighbourhood has a utility of $f_1(\xi) = \delta+ 2\xi(1-\delta)$. Moving to the blue would give it a utility of 
$f_2(1-\xi) = 2\delta+ 2\xi$. But this is larger than $f_1(\xi)$, so the agent will make this move. Notice that the resultant dynamic can only terminate
at the unique Nash equilibrium of ${\bf x}=(x_1, x_2)=(0, 1)$.
But this Nash equilibrium, where everyone lives in the blue neighbourhood, has a social welfare of ${\tt wel}({\bf x})=\sum_{i=1}^n x_i\cdot f_i(x_i) = 0\cdot \delta + 1\cdot 2\delta = 2\delta$. Note that this is a terrible outcome (albeit the unique equilibrium) when $\delta$ is small.

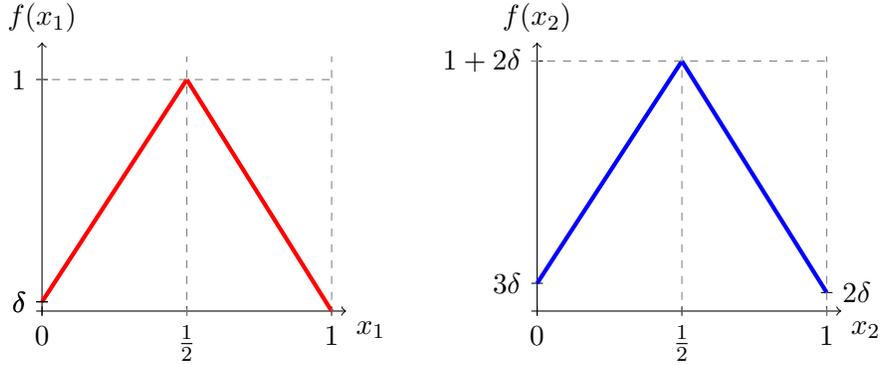
\begin{figure}[h!]
\centering
\newcommand{\dd}{0.04}
\begin{tikzpicture}[scale=0.77,xscale=5,yscale=4,domain=0:1]
  \draw[->] (-0.02,0) -- (1.05,0) node[below right] {$x_1$};
  \draw[->] (0,-0.02) -- (0,1.08+2*\dd) node[above] {$f(x_1)$};

  \draw (0,0.02) -- (0,-0.02) node[below] {$0$};
  \draw (0.5,0.02) -- (0.5,-0.02) node[below] {$\frac{1}{2}$};
  \draw (1,0.02) -- (1,-0.02) node[below] {$1$};
  
  \draw[color=red, ultra thick]   (0,\dd) -- (0.5,1);
  \draw[color=red, ultra thick]   (0.5,1) -- (1,0);
  \draw (0.02,\dd) -- (-0.02, \dd) node[left]{$\delta$};
  \draw (0.02,1) -- (-0.02, 1) node[left]{1};

  \draw (0.02,\dd) -- (-0.02, \dd) node[left]{$\delta$};

  \draw[thin, dashed, color=gray] (0.5,-0.02) -- (0.5,1.02+2*\dd);
  \draw[thin, dashed, color=gray] (1,-0.02) -- (1,1.02+2*\dd);
  \draw[thin, dashed, color=gray] (-0.02,1) -- (1.02,1); 
\end{tikzpicture}
\quad
\begin{tikzpicture}[scale=0.77,xscale=5,yscale=4,domain=0:1]
  \draw[->] (-0.02,0) -- (1.05,0) node[below right] {$x_2$};
  \draw[->] (0,-0.02) -- (0,1.08+2*\dd) node[above] {$f(x_2)$};

  \draw (0,0.02) -- (0,-0.02) node[below] {$0$};
  \draw (0.5,0.02) -- (0.5,-0.02) node[below] {$\frac{1}{2}$};
  \draw (1,0.02) -- (1,-0.02) node[below] {$1$};
  
  \draw (0.98,2*\dd) -- (1.02, 2*\dd) node[right]{$2\delta$};
  \draw (0.02,3*\dd) -- (-0.02, 3*\dd) node[left]{$3\delta$};
  \draw (0.02,1+2*\dd) -- (-0.02, 1+2*\dd) node[left]{$1+2\delta$};

  \draw[color=blue, ultra thick]   (0,\dd+2*\dd) -- (0.5,1+2*\dd);
  \draw[color=blue, ultra thick]   (0.5,1+2*\dd) -- (1,2*\dd);

  \draw[thin, dashed, color=gray] (0.5,-0.02) -- (0.5,1.02+2*\dd);
 \draw[thin, dashed, color=gray] (1,-0.02) -- (1,1.02+2*\dd);
  \draw[thin, dashed, color=gray] (-0.02,1+2*\dd) -- (1.02,1+2*\dd); 
\end{tikzpicture}
\caption{An instance with an equilibrium of low social welfare.}\label{fig:PoA}
\end{figure}

This example nicely illustrates the concept of the {\em price of anarchy},
the worst case ratio between the optimal social welfare and the
social welfare of a Nash equilibrium.\footnote{The price of anarchy is a very well-studied concept in game theory and computer science it measures the practical performance of a system against the optimal performance. The nomenclature derives from the idea that in settings where the price of anarchy is large, equilibria solutions where every agent acts individualistically (anarchy) may provide low social welfare. In such systems, it is critical to investigate whether or not improved outcomes can be obtained via coordination mechanisms or via slight, but practical, modifications to the underlying game.}
As shown, the price of anarchy of
this instance is $\frac{1+\delta}{2\delta} \approx \frac{1}{2\delta}$ which
tends to $\infty$ as $\delta$ tends to $0$.
Indeed the social welfare of the Nash equilibrium
satisfies ${\tt wel}({\bf x}) \le  2\delta\cdot {\tt opt}$.
That is, for small $\delta$, the Nash equilibrium delivers negligible social welfare in comparison to the optimal solution.

\subsection{A Braess-Type Paradox in Urban Planning}\label{sec:braess}

This potential for equilibria to result in catastrophically poor societal welfare outcomes is a major motivation for this work.
Specifically, this possibility has serious implications for our specific topic
of study, namely, targeted investments in urban planning.
To see this, consider the example in Figure~\ref{fig:supra-negative}.
Assume, to begin, that the red and blue neighbourhoods have identical
utility functions (the solid red and blue lines). Now imagine that after 
a small targeted investment the blue neighbourhood has improved very slightly,
resulting in the dashed blue utility function. Before the improvement
the situation is the same as in Figure~\ref{fig:NE} and, consequently, the social welfare at equilibrium is $1$.
But after the investment the situation is the same as in Figure~\ref{fig:PoA} and, hence, the social welfare at equilibrium has fallen dramatically to $2\delta$.
Ergo, we obtain an extreme form of Braess Paradox, the micro-behaviours that result from an improvement in the neighbourhood can lead to a calamitous decline in social welfare.

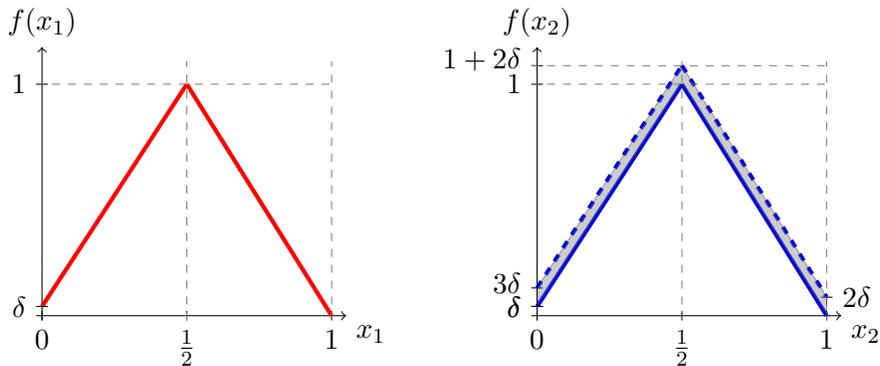
\begin{figure}[h!]
\centering
\newcommand{\dd}{0.04}
\begin{tikzpicture}[scale=0.77,xscale=5,yscale=4,domain=0:1]
  \draw[->] (-0.02,0) -- (1.05,0) node[below right] {$x_1$};
  \draw[->] (0,-0.02) -- (0,1.08+2*\dd) node[above] {$f(x_1)$};

  \draw (0,0.02) -- (0,-0.02) node[below] {$0$};
  \draw (0.5,0.02) -- (0.5,-0.02) node[below] {$\frac{1}{2}$};
  \draw (1,0.02) -- (1,-0.02) node[below] {$1$};
  
  \draw[color=red, ultra thick]   (0,\dd) -- (0.5,1);
  \draw[color=red, ultra thick]   (0.5,1) -- (1,0);
  \draw (0.02,\dd) -- (-0.02, \dd) node[left]{$\delta$};
  \draw (0.02,1) -- (-0.02, 1) node[left]{1};

  \draw[thin, dashed, color=gray] (0.5,-0.02) -- (0.5,1.02+2*\dd);
  \draw[thin, dashed, color=gray] (1,-0.02) -- (1,1.02+2*\dd);
  \draw[thin, dashed, color=gray] (-0.02,1) -- (1.02,1);
\end{tikzpicture}
\quad
\begin{tikzpicture}[scale=0.77,xscale=5,yscale=4,domain=0:1]
  \draw[->] (-0.02,0) -- (1.05,0) node[below right] {$x_2$};
  \draw[->] (0,-0.02) -- (0,1.08+2*\dd) node[above] {$f(x_2)$};

  \draw (0,0.02) -- (0,-0.02) node[below] {$0$};
  \draw (0.5,0.02) -- (0.5,-0.02) node[below] {$\frac{1}{2}$};
  \draw (1,0.02) -- (1,-0.02) node[below] {$1$};
  
  \draw[color=blue, ultra thick]   (0,\dd) -- (0.5,1);
  \draw[color=blue, ultra thick]   (0.5,1) -- (1,0);
  \draw (0.02,\dd) -- (-0.02, \dd) node[left]{$\delta$};
  \draw (0.98,2*\dd) -- (1.02, 2*\dd) node[right]{$2\delta$};
  \draw (0.02,3*\dd) -- (-0.02, 3*\dd) node[above left = -2mm and 0mm]{$3\delta$};
  \draw (0.02,1) -- (-0.02, 1) node[left]{1};
  \draw (0.02,1+2*\dd) -- (-0.02, 1+2*\dd) node[above left = -2mm and 0mm]{$1+2\delta$};

  \draw[color=blue,ultra thick,dashed]   (0,\dd+2*\dd) -- (0.5,1+2*\dd);
  \draw[color=blue,ultra thick,dashed]   (0.5,1+2*\dd) -- (1,2*\dd);
  \draw (0.02,\dd) -- (-0.02, \dd) node[left]{$\delta$};

  \draw[thin, dashed, color=gray] (0.5,-0.02) -- (0.5,1.02+2*\dd);
  \draw[thin, dashed, color=gray] (1,-0.02) -- (1,1.02+2*\dd);
  \draw[thin, dashed, color=gray] (-0.02,1) -- (1.02,1);
  \draw[thin, dashed, color=gray] (-0.02,1+2*\dd) -- (1.02,1+2*\dd); 

   \draw [fill=gray, opacity=0.4]
       (0, 3*\dd) -- (0.5,1+2*\dd)  -- (1,2*\dd) -- (1,0) -- (0.5, 1) -- (0, \dd) -- cycle;
\end{tikzpicture}
\caption{An instance with supra-negative investment returns.}\label{fig:supra-negative}
\end{figure}

Accordingly, it is vital to understand the dynamic effects caused by micro-behaviours in urban planning. 
Most notably, it induces the following question: {\em is it possible to make small targeted investments
to ensure that every resultant equilibrium has high social welfare?}
This is the key question we examine in this paper and,
remarkably, we answer it in the affirmative.

\subsection{Our Results}\label{sec:results}

Our goal is to make targeted investments in a selection of neighbourhoods to ensure that the social welfare in the metropolis at the resulting equilibrium is comparable to the optimal solution.
Moreover, we want the total cost of these targeted investments to be small.
To quantify the cost, we assume that a targeted investment in neighbourhood $i$ changes the utility function from $f_i$ to $g_i$. 
Moreover, the cost of the investment is proportional to the area of the symmetric difference $|f_i \oplus g_i|$, i.e., the area between the two functions. 
It follows that the total cost of all the targeted improvements made is $\sum_{i=1}^n |f_i \oplus g_i|$.

For example, in Figure~\ref{fig:supra-negative} the cost of the targeted investment is proportional to the shaded area in the blue neighbourhood.
Observe that this area is $2\delta$ which is negligible in comparison to the optimal social welfare of ${\tt opt}=1+\delta$.
Our main result is that a targeted investment of that small
a magnitude is always sufficient to ensure an equilibrium outcome that is comparable to the optimal
solution. (Observe that in Figure~\ref{fig:supra-negative}, zero investment
was needed to ensure this as the original equilibrium was the optimal solution!)

Before stating the main theorem, we must explain why the
assumption that investment cost is proportional to the 
area of the symmetric difference is natural for urban planning settings.
First, evidently, the larger the increase in the utility function (from $f_i$ to $g_i$), the higher the cost. 
Thus, the improvement cost is a function of the improvement of the utility function (i.e. the area of the symmetric difference).
But why would this cost
also be independent of where along the $x$-axis the improvements occur?
The reason is that improvement costs are roughly independent of congestion
levels in the neighbourhood. Specifically, the cost of building 
schools, roads, playgrounds, traffic calming measures, etc.
may be slightly more in a densely populated neighbourhood than in
a sparsely populated neighbourhood, but never more than a constant 
multiplicative factor greater. Thus we may assume that investment costs are proportional to the area of the symmetric difference.\footnote{Moreover, we can relax this assumption and allow investment costs to be dependent on population density, within a constant multiplicative factor. This results presented in this paper then still hold subject to that multiplicative factor.}

Our main theorem, which we prove in Section~\ref{sec:upper}, is that to ensure a social welfare of at least $\epsilon\cdot {\tt opt}$ only requires an investment of at most $0.81 \epsilon^2\cdot {\tt opt}$.

\begin{theorem}\label{thm:main}
In any urban planning instance, there exist targeted investments with a total cost at most $0.81 \epsilon^2\cdot {\tt opt}$ that ensure every resulting equilibrium has welfare at least $\epsilon\cdot {\tt opt}$.
\end{theorem}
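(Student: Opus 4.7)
The plan is to fix a target utility level $\lambda_0 := \epsilon\cdot{\tt opt}/N$ and, for every neighbourhood $i$, to design a concave modification $g_i$ of $f_i$ so that every Nash equilibrium of the modified game has common utility at least $\lambda_0$. In a non-atomic concave game the equilibrium conditions give a common utility $\lambda$ with $g_i(x_i)=\lambda$ on every interior-inhabited neighbourhood, $g_i(u_i)\ge\lambda$ on capacity-bound ones, and $g_j(0)\le\lambda$ on empty ones. Summing $x_i g_i(x_i)$ over all inhabited neighbourhoods then gives welfare $\ge\lambda\cdot N$, so forcing $\lambda\ge\lambda_0$ at every equilibrium automatically yields welfare at least $\lambda_0\cdot N=\epsilon\cdot{\tt opt}$.

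To design the investment, let $\mathbf{x}^*$ be a welfare-maximising profile, set $v_i:=f_i(x_i^*)$, and classify neighbourhood $i$ as \emph{good} if $v_i\ge\lambda_0$ and \emph{bad} otherwise. A good neighbourhood already carries its optimum mass $x_i^*$ at a utility at least $\lambda_0$, so I would leave $f_i$ essentially unchanged there (possibly flattening the strictly ascending portion to kill spurious low-utility equilibria on the increasing branch). For each bad neighbourhood I would add a concave tent-shaped cap above $f_i$ whose apex, placed near $x_i^*$, just exceeds $\lambda_0$; taking the concave envelope of $f_i$ with this tent yields a concave $g_i$ satisfying $g_i(y)\ge\lambda_0$ on an interval containing $x_i^*$. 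The construction is arranged so that $\sum_i \max\{y\in[0,u_i] : g_i(y)\ge\lambda_0\}\ge N$, which combined with the equilibrium characterisation rules out any equilibrium with common utility below $\lambda_0$.

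The delicate step, and the main obstacle, is the cost accounting. The cost of the tent above a bad neighbourhood is, up to a constant factor, $(\lambda_0-v_i)$ times the tent width, so a naive sum over bad neighbourhoods bounds the total by $\lambda_0\cdot\sum_{\text{bad}}x_i^*$, which by the identity $\sum_i v_i x_i^*={\tt opt}$ and the definition of the bad set is only of order $\epsilon\cdot{\tt opt}$ — off by a factor of $\epsilon$ from the desired $\epsilon^2$. To recover the missing factor one optimises the tent geometry: for a fixed shortfall $\lambda_0-v_i$, a taller and narrower tent supplies the same capacity at strictly smaller area, and balancing tent height, tent width, and shortfall yields the quadratic $\epsilon^2$ dependence. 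The numerical constant $0.81$ should then fall out of this first-order tent-shape optimisation. The hardest instances to handle will be those where the bad mass $\sum_{\text{bad}} x_i^*$ is a substantial fraction of $N$, since then the caps contribute on the same scale as the good-mass contribution to ${\tt opt}$ and the optimisation must be carried out jointly across many neighbourhoods while maintaining the capacity condition.
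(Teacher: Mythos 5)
There is a genuine gap, and it is exactly at the step you flag as ``the delicate step'': the cost accounting cannot be repaired by optimising tent geometry, because the target you fix at the outset is the wrong one. Guaranteeing a \emph{uniform} per-agent utility floor of $\lambda_0=\epsilon\cdot{\tt opt}/N$ at every equilibrium is a strictly stronger requirement than guaranteeing welfare $\epsilon\cdot{\tt opt}$, and it provably costs $\Omega(\epsilon\cdot{\tt opt})$ on some instances. Concretely, take one location with $u_0=1$ and $f_0(x)=\min(2x,2(1-x))$, plus $n$ locations with $u_i=1$ and $f_i\equiv 0$, and $N=(n+1)/2$. Here ${\tt opt}=\tfrac12$ and $\lambda_0=\epsilon/(n+1)$. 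In any equilibrium at least $N-1$ units of population must sit in the zero-utility locations, so to give all of them utility $\ge\lambda_0$ you must raise concave functions from $0$ to at least $\lambda_0$ over intervals of total length at least $N-1$; by concavity each such $g_i$ lies above the chord from $(0,0)$ to the point where it reaches $\lambda_0$, so the total area is at least $\tfrac12\lambda_0(N-1)\approx\tfrac{\epsilon}{4}=\tfrac{\epsilon}{2}\,{\tt opt}$. No choice of tent height or width escapes this: for a tent supplying capacity $c$ at level $\lambda_0$ above a function of value $v_i$, the minimal area is $\Theta\bigl(c\,(\lambda_0-v_i)\bigr)$, linear (not quadratic) in the shortfall, so summing over the bad set leaves you at $\Theta(\epsilon\cdot{\tt opt})$, as your own back-of-envelope calculation already shows.

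The paper gets the $\epsilon^2$ from a different mechanism on both axes. For the ``how'', the target at location $i$ is $\tau_i=\epsilon\cdot h_i$, an $\epsilon$-fraction of that location's \emph{own peak}, rather than a uniform level; concavity then implies that the sublevel set $\{x: f_i(x)<\tau_i\}$ occupies only an $\epsilon$-fraction of $[0,u_i]$ and needs to be raised by at most $\tau_i=\epsilon h_i$, so the repair cost is a product of two $\epsilon$-factors, at most $\tfrac{\epsilon^2}{2}u_ih_i$. For the ``where'', one does \emph{not} modify all (or all ``bad'') locations: either a single critical location with $u_ih_i\ge\phi\cdot{\tt opt}$ is modified, or the locations are ordered by decreasing $h_i$ and only a prefix with $\sum u_ih_i$ between ${\tt opt}$ and $(1+\phi){\tt opt}$ is modified, which caps the total cost at $\tfrac{1+\phi}{2}\epsilon^2{\tt opt}\approx 0.81\epsilon^2{\tt opt}$. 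The welfare guarantee then comes not from a common utility level but from a dichotomy: either all modified locations are full (and they alone carry welfare $\ge\epsilon\sum u_ih_i\ge\epsilon\cdot{\tt opt}$), or the first non-full modified location forces every agent's utility up to its target. In particular the resulting equilibria may leave most agents with utility far below $\epsilon\cdot{\tt opt}/N$; abandoning the uniform floor is what makes the $\epsilon^2$ budget attainable. Your equilibrium characterisation also needs care (only inhabited \emph{non-full} locations share a common utility level; full ones can sit strictly above it), but that is a minor issue next to the choice of target.
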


To illustrate, assume that $\epsilon=\frac{1}{10}$ and our targeted investments cost \$100 million. The resulting social welfare at the new equilibrium will be at least \$1.235 billion.
Thus, if the initial equilibrium had low welfare then we obtain supra-positive returns from our targeted investments. Indeed, observe that obtaining
the same welfare via cash handouts to individuals rather than targeted investments would cost the taxpayer over a billion dollars!

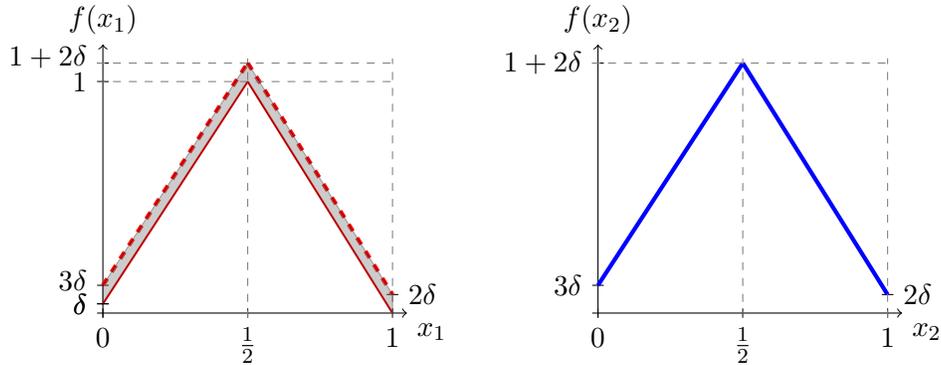
\begin{figure}[h!]
\centering
\newcommand{\dd}{0.04}
\begin{tikzpicture}[scale=0.77,xscale=5,yscale=4,domain=0:1]
  \draw[->] (-0.02,0) -- (1.05,0) node[below right] {$x_1$};
  \draw[->] (0,-0.02) -- (0,1.08+2*\dd) node[above] {$f(x_1)$};

  \draw (0,0.02) -- (0,-0.02) node[below] {$0$};
  \draw (0.5,0.02) -- (0.5,-0.02) node[below] {$\frac{1}{2}$};
  \draw (1,0.02) -- (1,-0.02) node[below] {$1$};
  
  \draw[color=red, thick]   (0,\dd) -- (0.5,1);
  \draw[color=red, thick]   (0.5,1) -- (1,0);
  \draw (0.02,\dd) -- (-0.02, \dd) node[left]{$\delta$};
  \draw (0.98,2*\dd) -- (1.02, 2*\dd) node[right]{$2\delta$};
  \draw (0.02,3*\dd) -- (-0.02, 3*\dd) node[above left = -2mm and 0mm]{$3\delta$};
  \draw (0.02,1) -- (-0.02, 1) node[left]{1};
  \draw (0.02,1+2*\dd) -- (-0.02, 1+2*\dd) node[above left = -2mm and 0mm]{$1+2\delta$};

  \draw[color=red,ultra thick,dashed]   (0,\dd+2*\dd) -- (0.5,1+2*\dd);
  \draw[color=red,ultra thick,dashed]   (0.5,1+2*\dd) -- (1,2*\dd);
  \draw (0.02,\dd) -- (-0.02, \dd) node[left]{$\delta$};

  \draw[thin, dashed, color=gray] (0.5,-0.02) -- (0.5,1.02+2*\dd);
  \draw[thin, dashed, color=gray] (1,-0.02) -- (1,1.02+2*\dd);
  \draw[thin, dashed, color=gray] (-0.02,1) -- (1.02,1);
  \draw[thin, dashed, color=gray] (-0.02,1+2*\dd) -- (1.02,1+2*\dd); 

    \draw [fill=gray, opacity=0.4]
       (0, 3*\dd) -- (0.5,1+2*\dd)  -- (1,2*\dd) -- (1,0) -- (0.5, 1) -- (0, \dd) -- cycle;
\end{tikzpicture}
\quad
\begin{tikzpicture}[scale=0.77,xscale=5,yscale=4,domain=0:1]
  \draw[->] (-0.02,0) -- (1.05,0) node[below right] {$x_2$};
  \draw[->] (0,-0.02) -- (0,1.08+2*\dd) node[above] {$f(x_2)$};

  \draw (0,0.02) -- (0,-0.02) node[below] {$0$};
  \draw (0.5,0.02) -- (0.5,-0.02) node[below] {$\frac{1}{2}$};
  \draw (1,0.02) -- (1,-0.02) node[below] {$1$};

  \draw (0.98,2*\dd) -- (1.02, 2*\dd) node[right]{$2\delta$};
  \draw (0.02,3*\dd) -- (-0.02, 3*\dd) node[left]{$3\delta$};
  \draw (0.02,1+2*\dd) -- (-0.02, 1+2*\dd) node[left]{$1+2\delta$};

  \draw[color=blue,ultra thick]   (0,\dd+2*\dd) -- (0.5,1+2*\dd);
  \draw[color=blue, ultra thick]   (0.5,1+2*\dd) -- (1,2*\dd);

  \draw[thin, dashed, color=gray] (0.5,-0.02) -- (0.5,1.02+2*\dd);
  \draw[thin, dashed, color=gray] (1,-0.02) -- (1,1.02+2*\dd);
  \draw[thin, dashed, color=gray] (-0.02,1+2*\dd) -- (1.02,1+2*\dd); 
\end{tikzpicture}
\caption{An instance with supra-positive investment returns.}\label{fig:supra-improvement}
\end{figure}

For a second illustration, let's return to our running example.
In Figure~\ref{fig:supra-improvement}, the solid lines indicate the
starting utility functions. These functions correspond to the 
terminal functions in Figure~\ref{fig:supra-negative} which we have seen give an equilibrium with social welfare $2\delta$, despite ${\tt opt}=1+\delta$. Suppose we now 
make a targeted investment in the red neighbourhood, thus improving its
utility function to the dashed red lines. The cost of this improvement is $2\delta$, the area of the shaded segment in Figure~\ref{fig:supra-improvement}, but it immediately leads to a unique Nash equilibrium with
social welfare $1+2\delta$, up from the original equilibrium social welfare of $2\delta$. Thus after a very small investment, the micro-behaviour dynamics induce an outcome with social welfare even greater than
the original optimal social welfare.

Again it is important to emphasise how surprising the result in Theorem~\ref{thm:main} is.
The alternative approach to allow the agents to have a total utility of
$\epsilon\cdot {\tt opt}$ would be via a subsidy or tax reductions.
But the cost of these approaches is typically 
$\Omega(\epsilon\cdot {\tt opt})$, a dramatic cost increase over $O(\epsilon^2\cdot {\tt opt})$.

The reader may ask whether Theorem~\ref{thm:main} can be improved: {\em is it possible to obtain such supra-positive returns for a lower cost?}
The answer is no. More specifically, while it may be possible to obtain minor improvements 
in the constant, the dominant multiplicative factor in the cost function, that is $\epsilon^2$, is tight and cannot be improved. We prove this in Section~\ref{sec:lower} with the following theorem.

\begin{theorem}\label{thm:minor}
There exist urban planning instances
for which ensuring every resulting equilibrium has welfare at least $\epsilon\cdot {\tt opt}$ requires targeted investments with a total cost at least $0.12\epsilon^2\cdot {\tt opt}$.
\end{theorem}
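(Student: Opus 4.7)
The plan is to exhibit an explicit one-parameter family of two-neighbourhood instances and to show that on at least one of them, any modification $(g_1,g_2)$ guaranteeing equilibrium welfare at least $\epsilon\cdot\texttt{opt}$ must cost at least $0.12\epsilon^2\cdot\texttt{opt}$. I would take $n=2$, $u_1=u_2=1$, $N=1$, and use tent-shaped utility functions in the spirit of Figure~\ref{fig:PoA}: $f_1$ peaked at $(\tfrac12,1)$ with small positive endpoints, and $f_2$ pointwise dominating $f_1$ by a small gap so that the unique Nash equilibrium $(0,1)$ delivers welfare far below $\epsilon\cdot\texttt{opt}$ while $\texttt{opt}\to 1$ is achieved at the balanced split $(\tfrac12,\tfrac12)$. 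A free gap parameter $\delta$ would be tuned as a function of $\epsilon$ to maximise the resulting bound.

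Next I would carry out a case analysis on the equilibria of any modified game $(g_1,g_2)$. An interior equilibrium $(x^\star,1-x^\star)$ satisfies $g_1(x^\star)=g_2(1-x^\star)$ with welfare equal to this common value, so the constraint becomes $g_1(x^\star)=g_2(1-x^\star)\geq\epsilon\cdot\texttt{opt}$. The corner $(0,1)$ must either be destroyed --- requiring $g_1(0)>g_2(1)$ --- or retained with welfare $g_2(1)\geq\epsilon\cdot\texttt{opt}$, and symmetrically for $(1,0)$. The heart of the proof is the geometric claim that, because $g_1,g_2$ are concave and non-negative and the original tents are low near their endpoints, any admissible raise of some $g_i$ to reach the required value at a point at distance $\sim\epsilon$ from the endpoint of $f_i$ costs at least the area of a triangular wedge of base $\sim\epsilon$ and height $\sim\epsilon\cdot\texttt{opt}$, namely area $\Theta(\epsilon^2\cdot\texttt{opt})$.

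Finally I would optimise over how the modifier apportions the required lift between $g_1$ and $g_2$ and over the position $x^\star$ of the surviving interior equilibrium; this reduces to an elementary one-variable calculus problem whose minimum yields the constant $0.12$ after taking $\delta\to 0$. The hardest step, I expect, is ruling out clever non-local modifications that create multiple interior equilibria or simultaneously exploit both endpoints to save area. I would handle this by reducing the cost of fixing any single target value to the \emph{minimum-area concave extension} problem, arguing that the triangular wedge is extremal, and then verifying that every surviving equilibrium of the modified game must still be paid for within the area budget.
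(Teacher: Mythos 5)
Your overall framework---a two-neighbourhood instance, a case analysis on the equilibria of any modified pair $(g_1,g_2)$, and an area lower bound for the forced lift---has the right shape, but the instance you chose defeats it. For the tent functions of Figure~\ref{fig:PoA} with gap $\delta\to 0$, the minimum cost of guaranteeing high equilibrium welfare is $O(\delta)$, not $\Omega(\epsilon^2\cdot{\tt opt})$: this is precisely the supra-positive example of Figure~\ref{fig:supra-improvement}. Taking $g_1=f_1+2\delta$ costs only $2\delta$, destroys the corner equilibrium $(0,1)$ (since $g_1(0)=3\delta>2\delta=g_2(1)$), and leaves $(\tfrac12,\tfrac12)$ as the unique equilibrium, with welfare $1+2\delta\ge\epsilon\cdot{\tt opt}$. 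No utility is ever raised to height $\epsilon\cdot{\tt opt}$ near an endpoint, so your triangular-wedge claim is false for this instance; the ``clever non-local modification'' you hoped to rule out actually exists and is cheap. The structural reason is that in your instance the welfare optimum $(\tfrac12,\tfrac12)$ is a point where both original utilities are already near their peaks, so a vanishing nudge tips the dynamics all the way there.

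The paper instead picks an instance in which the optimum can never be turned into an equilibrium cheaply: $u_1=u_2=1$, $N=1$, $f_1(x)=1-x$ (decreasing) and $f_2\equiv 0$, so ${\tt opt}=\tfrac14$ is attained at $(\tfrac12,\tfrac12)$ where location $2$ contributes nothing and location $1$'s residents get utility $\tfrac12$ while location $2$'s get $0$. Any affordable modification leaves location $2$ essentially worthless: a non-negative concave $g_2$ attaining height $\epsilon/4$ anywhere has, by concavity, area at least $\tfrac12\cdot 1\cdot\tfrac{\epsilon}{4}=\Omega(\epsilon)$ above $f_2\equiv 0$. Hence every cheap equilibrium crowds all agents into location $1$ and has welfare $g_1(1)$, and raising $g_1(1)$ to $\epsilon\cdot{\tt opt}=\epsilon/4$ against the slope $-1$ of $f_1$ costs a triangle of area $\tfrac12(\epsilon/4)^2=\tfrac18\epsilon^2\cdot{\tt opt}$. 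That is where the wedge argument genuinely bites. To rescue your construction you would need $\delta=\Theta(\epsilon^2)$ rather than $\delta\to 0$, together with a separate proof that no modification cheaper than $\Theta(\delta)$ suffices---a different and more delicate argument than the one you outline.
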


\section{Background and Related Literature}\label{sec:background}

Our study builds on a rich body of work at the intersection of game theory, urban planning, and agent-based modelling. A central theme in this literature is the disconnect between individual incentives and collective welfare -- a phenomenon famously explored by Schelling in his seminal model of segregation dynamics~\cite{schelling1971dynamic}. Schelling showed that even slight individual preferences for similar neighbours can drive highly segregated group outcomes. 

While Schelling’s original work focused on a distance based neighbourhood model in general graphs, he also introduced the \emph{bounded neighbourhood model}, where agents evaluate entire neighbourhoods rather than immediate surroundings~\cite{schelling1971dynamic}. In this formulation, a city is partitioned into blocks with finite capacities, and agents decide to enter or exit based on the overall demographic or density profile of each block. The bounded neighbourhood model revealed how strategic location choices, even under mild individual preferences, can generate sharp macro-level discontinuities, such as tipping points \cite{CMR08, Grod57, Zhang11} and abrupt shifts in neighbourhood composition.

A number of subsequent papers build on the bounded neighbourhood framework, showing how modest levels of altruism \cite{grauwin2009competition, jensen2018giant} and tolerance \cite{gauvin2009phase} can bring the system closer to social optimum. These results reinforce the insight that local incentives can misalign sharply with global welfare, a phenomenon we aim to mitigate through targeted intervention.

Su et al.~\cite{su2020intriguing,su2023significant} restrict how individuals can move. In particular, they found that even purely egoistic behaviour can yield efficient equilibria if individuals can only move between blocks via a star-like connection topology. These results suggest that structural and behavioural interventions alike can help mitigate inefficiencies.

To contextualize our work within the broader economic literature, we highlight recent studies of modified Schelling’s model with game-theoretic approaches. Grauwin et al.~\cite{grauwin2012dynamic} analysed the model using evolutionary game theory, showing that linear utility functions align individual and collective welfare, whereas Schelling’s original and asymmetric peaked utilities often lead to stable but suboptimal segregated outcomes. 

Grauwin et al.~\cite{grauwin2009competition} slightly depart from Schelling's model with two populations and propose a simplified version with homogeneous agents and utility functions that depend only on local density. Methodologically, our work is closest to that of Grauwin et al.~\cite{grauwin2009competition}, though we shift the focus from emergent behaviour to neighbourhood design. Rather than assume agents change behaviour intrinsically, we ask how targeted investments can promote socially desirable equilibria. 

More recently, Schelling's original model has also been studied from a game theoretic perspective~\cite{chauhan2018schelling,echzell2019convergence}. In this model, agents are happy if at least a certain fraction of neighbours are of the same type. There the authors provide results on the convergence to equilibria and on the price of anarchy, where social utility is defined as the number of happy individuals. Because of the different definition, those results cannot be compared to our total utility based results. Kanellopoulos et al.~\cite{kanellopoulos2021modified} and Bil\`o et al.~\cite{bilo2022topological} further examined equilibrium inefficiencies under varying dynamics. Agarwal et al. \cite{agarwal2020swap, agarwal2021schelling} study a generalisation of this model with $k$ rather than just $2$ types of agents.

Additionally, our model shares similarities with congestion games, a well-established area in economics and computer science~\cite{rosenthal1973class,roughgarden2005selfish,roughgarden2002bad}. 
In the congestion game literature, the price of anarchy has been extensively studied for both the non-atomic \cite{roughgarden2002bad} and atomic \cite{AlandDGMS11, BhawalkarGR14, roughgarden2015intrinsic} versions, and interventions to improve the improve the price of anarchy (see \cite{PaccagnanG24} and references therein). 
However, while traditional congestion game studies assume monotonic utility structures for agents, we adopt a concave utility framework. This modification better aligns with the dynamics of the Schelling-inspired family of games and is necessary in allowing us to capture the more nuanced trade-offs involved in strategic decision-making and resource allocation for urban neighbourhood games. This new perspective allows us to better understand and mitigate inefficiencies in strategic agent interactions, for example, by exploiting dynamic and tipping behaviours.

\section{Case Study: Reversing the Donut Effect}\label{sec:case-study}

A classical problem in urban planning is deurbanisation, whereby population and economic activity shift from dense urban cores toward surrounding suburban areas~\cite{baum2007did,henderson2004handbook,mieszkowski1993causes}.
Historically, this dynamic has been studied in the context of race~\cite{Grod57}. Indeed, during the second half of the twentieth century, this process was observed across many U.S. metropolitan areas, including Detroit, St. Louis, Cleveland, Baltimore, and Philadelphia, where suburban population growth coincided with stagnation or decline in inner-city populations. 
More recently, the dynamic has been viewed with respect to 
differences in the quality of services between urban and suburban
areas, traffic housing costs, traffic and pollution, and the
popularity of remote work. The resultant negative dynamics leading to hollowing out of the urban core has been dubbed the ``donut effect''~\cite{ramani2021donut}.

In this section, we explain how targeted investments can 
be used to reverse this negative dynamic. 
An illustration of the spatial pattern corresponding to the donut effect using our model is given in Figure~\ref{fig:city_initial}. 
In this example $n=11$, with one large central urban core and ten smaller suburban regions. Originally, the utility functions in the urban area are similar to that of the suburban areas. But, despite that, a negative dynamic forms whereby population moves from the inner-city to the suburbs. The process terminates when the suburban areas reach a population density of $84\%$ (represented by the blue blocks) and the inner-city has a 
population density of just $16\%$ (represented by the red blocks).
For clarity of discussion, the technical details leading to these statistics are not important here; we refer the reader to the appendix for the associated utility functions and formal mathematical details.


\begin{figure}[h]
    \centering
    \begin{subfigure}{0.45\textwidth}
        \centering
        \includegraphics[width=\linewidth]{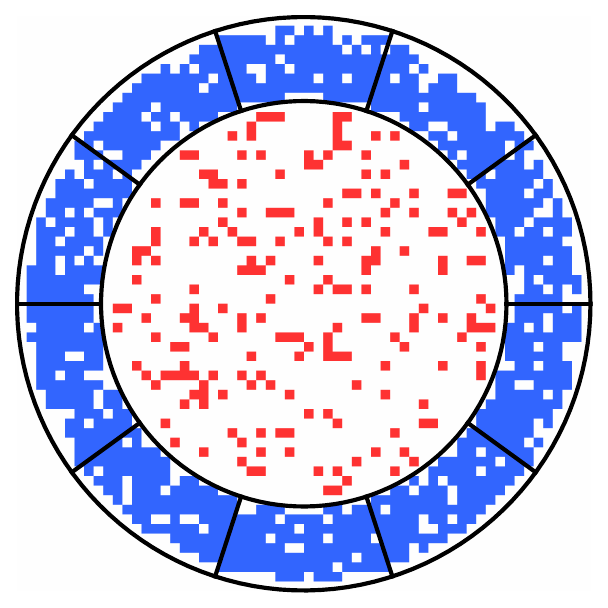}
        \caption{Initially, the equilibrium configuration has the suburban region populated with density $\rho_{B}=0.84$, while the inner-city region has low population density $\rho_{R}=0.16$.}
        \label{fig:city_initial}
    \end{subfigure}
    \hfill
    \begin{subfigure}{0.45\textwidth}
        \centering
        \includegraphics[width=\linewidth]{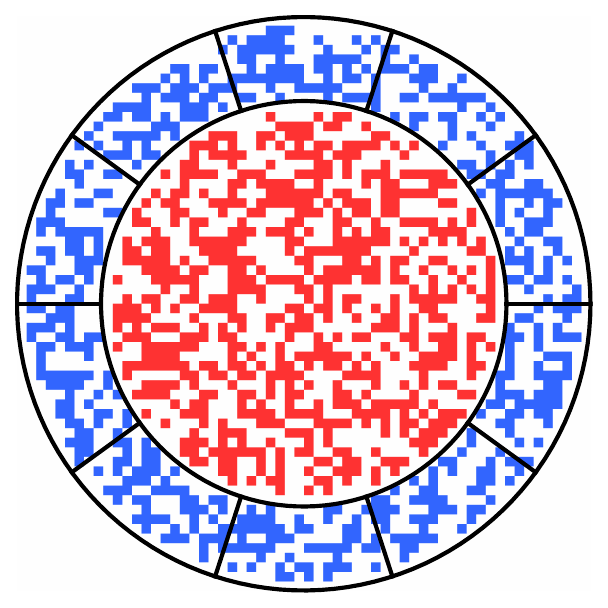}
        \caption{After a carefully targeted investment, the equilibrium configuration has a balanced population density of $\rho = 0.5$in in both the suburban neighbourhoods and the inner-city region.}
        \label{fig:city_improved}
    \end{subfigure}

    \caption{A circular city composed of one central inner-city region (red) and an equal-area suburban region (blue), the latter subdivided into ten equally sized neighbourhoods. The spatial domain is discretized on a $60 \times 60$ grid, and population density is represented by coloured filled pixels.}
    \label{fig:circular_city}
\end{figure}

However, it is possible to reverse this dynamic using small targeted investments. Indeed, small investments to slightly improve the utility function in the inner-city causes the population to iteratively
move back into the urban core. This positive dynamic terminates
when the inner-city and each suburban region have a population density of $50\%$. This is illustrated in Figure~\ref{fig:city_improved}. We remark that the terms negative and positive dynamic are appropriate here because every agent is better off after the targeted investment (again the details are given in the appendix).

This example illustrates, therefore, that our model qualitatively captures a spatial configuration that reflects a robust and widely documented occurrence in urban development. It also highlights ways in which the model illuminates ways to circumvent
negative dynamics that arise in urban environments using 
targeted investments.

\section{Carefully Targeted Investments Guarantee High Welfare}\label{sec:upper}
In this section we prove our main result, Theorem~\ref{thm:main}. 
\begin{restatetheorem}{thm:main}
 In any urban planning instance, there exist targeted investments with a total cost at most $0.81 \epsilon^2\cdot {\tt opt}$ that ensure every resulting equilibrium has welfare at least $\epsilon\cdot {\tt opt}$.   
\end{restatetheorem}
That is, it is always possible to make low cost targeted investments to ensure any resultant equilibrium has high social welfare. To achieve this, there are two fundamental issues: how and where?
The ``where'' concerns in which neighbourhoods should we make our targeted investments? 
The ``how'' concerns what forms these targeted investments should take in the chosen neighbourhoods. We design an algorithm that addresses both questions. In Section~\ref{sec:how}, we present one method that will prove fruitful for the ``how''. In Section~\ref{sec:where}, we describe an effective method to select which neighbourhoods to invest it.

Before giving a detailed description of the algorithm and its mathematical analysis, let's recall the model and a formal description of the problem. Recall there are $n$ locations with valuation functions $f_1,f_2,\dots, f_n$ each on $[0,u_i]$, where $u_i$ is the capacity of location $i$. 
Let there be $N$ agents where we may assume that $\max_{i=1}^n u_i \le N \le \sum_{i=1}^n u_i$.
To see this, one, in practice $N$ is a defacto upper bound on $u_i$ as location $i$ can never contain more agents than the entire population. 
Two, in practice
$N$ is a defacto lower bound on $\sum_{i=1}^n u_i$
as together all the locations must be able to hold the
entire population. 

Furthermore, recall each function $f_i$ is non-negative and concave, but importantly need not be monotonic. 

Our objective is to obtain functions $g_1,g_2,\dots, g_n$ such that
\begin{enumerate}
\item The social welfare at equilibrium for $\{g_1,g_2,\dots, g_n\}$ is at least $\epsilon\cdot {\tt opt}$.
\item The improvement cost is small. Specifically,
the symmetric differences satisfy $\sum_{i=1}^n |f_i \oplus g_i| \le \epsilon^2\cdot {\tt opt}$.
\end{enumerate}
In fact, we show how to obtain a solution of utility at least $\epsilon \cdot {\tt opt}$ by incurring a cost of at most just $0.81\epsilon^2 \cdot {\tt opt}$.

\subsection{How: A Strategy for Improving a Location}\label{sec:how}

Suppose we select a location in which to make a targeted investment. What form should these investments take? For the purpose of obtaining
a resultant equilibrium with high social welfare, the
strategy we apply is to focus on the extreme cases, where the neighbourhood is either very sparsely populated or very densely populated. Our aim is that, after the investment, the neighbourhood will proffer its residents a reasonable (albeit, possibly quite small) utility when it is sparsely or densely populated. 

Let's now formally define this investment strategy,
which we dub the {\em rudimentary} strategy,
on location $i$. Assume $f_i$ has maximum value $h_i$
and passes through the three points
$(0,\ell_i)$, $(\sigma_i, h_i)$ and $(u_i, \ell'_i)$. (Possibly the middle point is the same as one of the two end points.)

The rudimentary strategy to improve a location $i$ is then as follows. Assign a {\em target payoff} of $\tau_i=\epsilon\cdot h_i$ for location $i$.
If $\ell_i< \tau_i$ then let $\alpha_i$ satisfy $f(\alpha_i)= \tau_i$ where $\alpha_i \le \sigma_i$; otherwise set $\alpha_i=0$. 
Similarly, if $\ell'_i< \tau_i$ then let $\beta_i$ satisfy $f(\beta_i)= \tau_i$ where $\beta_i \le \sigma_i$; otherwise set $\beta_i=u_i$. 

Given $\alpha_i$ and $\beta_i$, define a new function $g_i$ as
$$
g_i(x) =
\begin{cases}
\tau_i &\text{if } 0 \le  x < \alpha_i \\
f_i(x) &\text{if } \alpha_i \le x \le \beta_i\\
\tau_i &\text{if } \beta_i < x \le u_i
\end{cases}
$$
We can upper bound the total cost of upgrading $f_i$ to $g_i$ via the following lemma.
\begin{lemma}\label{lem:cost}
The cost of the rundimentary strategy applied at location $i$, with target $\tau_i=\epsilon\cdot h_i$, is at most $\frac{\epsilon^2}{2} \cdot u_i\cdot h_i$.
\end{lemma}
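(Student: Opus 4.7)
The plan is to compute the symmetric-difference area directly, invoking concavity of $f_i$ twice: first to bound the pointwise deficit $\tau_i - f_i(x)$, and then to bound the widths of the two improvement intervals.

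First I would write the cost as
\[
|f_i \oplus g_i| \;=\; \int_0^{\alpha_i}(\tau_i - f_i(x))\,dx \;+\; \int_{\beta_i}^{u_i}(\tau_i - f_i(x))\,dx,
\]
noting that either integral vanishes when the corresponding endpoint of $[0,u_i]$ is already above $\tau_i$ (so that $\alpha_i=0$ or $\beta_i=u_i$). For the left tail, concavity of $f_i$ on $[0,\alpha_i]$ says $f_i$ lies above the chord through $(0,\ell_i)$ and $(\alpha_i,\tau_i)$, so $\tau_i - f_i(x) \le (\tau_i - \ell_i)(1 - x/\alpha_i)$; integrating yields a bound of $\tfrac12 \alpha_i(\tau_i - \ell_i)$. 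Symmetrically, the right tail is at most $\tfrac12 (u_i - \beta_i)(\tau_i - \ell'_i)$.

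Next I would bound the widths $\alpha_i$ and $u_i - \beta_i$ by a second use of concavity, now on the larger intervals $[0,\sigma_i]$ and $[\sigma_i, u_i]$: since $f_i$ lies above the chord through $(0,\ell_i)$ and $(\sigma_i, h_i)$ and attains $\tau_i$ at $\alpha_i$, we get $\alpha_i \le \sigma_i\,(\tau_i - \ell_i)/(h_i - \ell_i)$, and analogously $u_i - \beta_i \le (u_i - \sigma_i)(\tau_i - \ell'_i)/(h_i - \ell'_i)$. Substituting,
\[
|f_i \oplus g_i| \;\le\; \frac{\sigma_i}{2}\cdot\frac{(\tau_i-\ell_i)^2}{h_i-\ell_i} \;+\; \frac{u_i-\sigma_i}{2}\cdot\frac{(\tau_i-\ell'_i)^2}{h_i-\ell'_i}.
\]

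To close the argument I would show each ratio is at most $\tau_i^2/h_i = \epsilon^2 h_i$, which reduces to checking that the scalar map $\ell \mapsto (\tau-\ell)^2/(h-\ell)$ is nonincreasing on $[0,\tau]$ whenever $\tau \le h$ --- a one-line calculus check, since its derivative has sign $(\tau-\ell)(\ell + \tau - 2h) \le 0$. Plugging this back in, the two terms collapse to $\tfrac{\epsilon^2}{2}\, h_i\bigl(\sigma_i + (u_i-\sigma_i)\bigr) = \tfrac{\epsilon^2}{2}\, u_i\, h_i$, as required. The only real subtlety is that last monotonicity step: the natural guess that the worst case is $\ell_i=0$ (utility starting from zero) happens to be correct, but it is not quite immediate from the shape of the ratio and needs the brief derivative check. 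Apart from that, the proof is just two successive applications of the chord inequality for concave functions.
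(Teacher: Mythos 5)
Your proof is correct and follows essentially the same route as the paper's: the same triangle bound $\tfrac12\,\alpha_i(\tau_i-\ell_i)$ on each tail via the chord inequality, the same width bounds $\alpha_i \le \sigma_i(\tau_i-\ell_i)/(h_i-\ell_i)$ and $u_i-\beta_i \le (u_i-\sigma_i)(\tau_i-\ell'_i)/(h_i-\ell'_i)$, and the same final collapse to $\tfrac{\epsilon^2}{2}u_i h_i$. The only cosmetic difference is that you justify $(\tau_i-\ell_i)^2/(h_i-\ell_i) \le \tau_i^2/h_i$ by a derivative/monotonicity check, whereas the paper deduces it directly from $0 \le \ell_i \le \tau_i \le h_i$ (equivalently, $h_i\ell_i + \tau_i^2 \le 2h_i\tau_i$); both are fine.
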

\begin{proof}
Let's assess the cost of improving $f_i$ to $g_i$.
By concavity the line segment between $(0, \ell_i)$ and $(\sigma_i, h_i)$ lies below $f_i$. This line segment has slope 
$\frac{(h_i-\ell_i)}{\sigma_i}$.
It follows that $\alpha_i \le (\tau_i-\ell_i)\cdot \frac{\sigma_i}{(h_i-\ell_i)}$.

The cost of improving $f_i$ to $g_i$ in the range $[0,\alpha_i]$ is then at most
\begin{eqnarray}\label{eq:improve-low}
\frac12\cdot \alpha_i\cdot (\tau_i-\ell_i) \le
\frac{\sigma_i}{2(h_i-\ell_i)}\cdot(\tau_i-\ell_i)^2.
\end{eqnarray}
Symmetrically, by concavity, the line segment between $(\sigma_i, h_i)$ and $(u_i, \ell'_i)$ lies below $f_i$.
This line segment has slope $-\frac{(h_i-\ell'_i)}{(u_i-\sigma_i)}$.
It follows that $u_i-\beta_i \le (\tau_i-\ell'_i)\cdot \frac{(u_i-\sigma_i)}{(h_i-\ell'_i)}$.

The cost of improving $f_i$ to $g_i$ in the range $[\beta_i, u_i]$ is then at most
\begin{eqnarray}\label{eq:improve-high}
\frac12\cdot (u_i-\beta_i)\cdot (\tau_i-\ell'_i) 
\le
\frac{(u_i-\sigma_i)}{2(h_i-\ell'_i)}\cdot(\tau_i-\ell'_i)^2.
\end{eqnarray}
Therefore, by (\ref{eq:improve-low}) and (\ref{eq:improve-high}), the total cost making these two improvements on locations $i$ is at most
\begin{eqnarray}\label{eq:improve-both}
\frac{\sigma_i}{2(h_i-\ell_i)}\cdot(\tau_i-\ell_i)^2 +
\frac{(u_i-\sigma_i)}{2(h_i-\ell'_i)}\cdot(\tau_i-\ell'_i)^2 
&\le&
\frac{\sigma_i}{2h_i}\cdot \tau_i^2 +
\frac{(u_i-\sigma_i)}{2h_i}\cdot \tau_i^2 \nonumber\\ 
&=& \frac{u_i\cdot \tau_i^2}{2 h_i} \nonumber\\ 
&=& \frac{\epsilon^2}{2} \cdot u_i\cdot h_i.
\end{eqnarray}
Here the inequality holds because $\tau_i\le h_i$, and the second equality holds as $\tau_i=\epsilon \cdot h_i$.
\end{proof}

\subsection{Where: Selecting Locations for Improvement}\label{sec:where}

We now address the complex question of where to make our targeted investments.
The problem is that it will be too expensive to attempt to improve every location.
Thus, we must be judicious in our choice of which locations to improve using the rudimentary strategy.

To select locations, our algorithm breaks the problem into two cases. We say that a location $\ell^*$ is {\em critical} if $u_{\ell^*}\cdot h_{\ell^*} \ge \phi\cdot {\tt opt}$, where $\phi=\frac{\sqrt{5}-1}{2}\approx 0.618$ is the golden ratio conjugate.
Either there is a critical location in the urban planning instance or no critical locations exist.

In the unlikely situation where a critical location exist then the selection problem is straightforward.
We must target our investments in the critical location. Let's formalize this approach and analyze its cost and performance.\\

\noindent (1) There is a critical location $\ell^*$. 

Let $u_{\ell^*}\cdot h_{\ell^*} = \lambda\cdot {\tt opt}$, where $\lambda\ge \phi$.
In this case, our strategy is simple: only improve location $\ell^*$.
In particular, we apply the rudimentary strategy at location $\ell^*$ using a target payoff
of $\tau_{\ell^*}= \epsilon^*\cdot h_{\ell^*}$, where $\epsilon^*= \frac{1}{\lambda}\cdot \epsilon$.

\begin{lemma}\label{lem:case1}
If there is a critical location then the total improvement cost is at most $0.81 \cdot \epsilon^2 \cdot {\tt opt}$ and the social welfare of
any resultant equilibrium is at least $\epsilon \cdot {\tt opt}$.
\end{lemma}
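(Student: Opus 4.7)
The plan is to handle the two halves of the lemma independently: the cost bound is essentially a direct corollary of Lemma~\ref{lem:cost}, while the welfare guarantee is the substantive part and calls for a short case analysis on whether the critical location is saturated at equilibrium.

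For the cost, since the rudimentary strategy is applied only at $\ell^*$ with target parameter $\epsilon^{*}=\epsilon/\lambda$, Lemma~\ref{lem:cost} gives an upper bound of
\[
\tfrac{(\epsilon^{*})^{2}}{2}\cdot u_{\ell^{*}}\cdot h_{\ell^{*}} \;=\; \tfrac{\epsilon^{2}}{2\lambda^{2}}\cdot \lambda\cdot \texttt{opt} \;=\; \tfrac{\epsilon^{2}}{2\lambda}\cdot \texttt{opt}.
\]
Because $\lambda\geq \phi=\tfrac{\sqrt{5}-1}{2}$, this is at most $\tfrac{\epsilon^{2}}{2\phi}\cdot \texttt{opt} = \tfrac{\sqrt{5}+1}{4}\cdot \epsilon^{2}\cdot \texttt{opt}$, and a routine numerical check shows $\tfrac{\sqrt{5}+1}{4}<0.81$, which yields the claimed cost bound.

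For the welfare guarantee, let $\mathbf{x}$ be any Nash equilibrium under the new profile $\{g_{1},\ldots,g_{n}\}$, where $g_{i}=f_{i}$ for $i\neq \ell^{*}$. The key structural fact is that by construction $g_{\ell^{*}}(x)\geq \tau_{\ell^{*}}=\epsilon^{*}\cdot h_{\ell^{*}}$ for every $x\in[0,u_{\ell^{*}}]$. I split into two cases. In the saturated case $x_{\ell^{*}}=u_{\ell^{*}}$, the contribution of $\ell^{*}$ alone to social welfare is at least $u_{\ell^{*}}\cdot \tau_{\ell^{*}} = \epsilon^{*}\cdot u_{\ell^{*}} h_{\ell^{*}} = (\epsilon/\lambda)\cdot (\lambda\cdot \texttt{opt}) = \epsilon\cdot \texttt{opt}$. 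In the unsaturated case $x_{\ell^{*}}<u_{\ell^{*}}$, the non-atomic equilibrium condition implies that every agent currently at some location $j$ must already receive utility at least $g_{\ell^{*}}(x_{\ell^{*}})\geq \tau_{\ell^{*}}$, for otherwise an infinitesimal mass would strictly benefit by relocating to the spare capacity at $\ell^{*}$. Summing over the whole population gives welfare at least $N\cdot \tau_{\ell^{*}} \geq u_{\ell^{*}}\cdot \tau_{\ell^{*}} = \epsilon\cdot \texttt{opt}$, where we use the standing assumption $N\geq \max_{i}u_{i}\geq u_{\ell^{*}}$.

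The main obstacle I anticipate is tightening the equilibrium argument in the unsaturated case: one needs to ensure that the concavity and non-atomicity of the model justify comparing $g_{\ell^{*}}(x_{\ell^{*}})$ to the utility of any inhabited location without worrying about discontinuities or measure-zero pathologies. Once that step is argued cleanly, combining the two cases yields welfare at least $\epsilon\cdot \texttt{opt}$, and paired with the cost computation above this completes the proof.
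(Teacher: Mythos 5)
Your proposal is correct and follows essentially the same route as the paper's proof: the identical case split on whether $x_{\ell^*}=u_{\ell^*}$, the same use of $N\ge u_{\ell^*}$ in the unsaturated case, and the same cost computation via Lemma~\ref{lem:cost} with the bound $\frac{\epsilon^2}{2\lambda}\le\frac{\epsilon^2}{2\phi}=\frac{1+\phi}{2}\epsilon^2\approx 0.81\epsilon^2$. The ``obstacle'' you flag about non-atomicity is not an issue: $g_{\ell^*}$ is continuous and bounded below by $\tau_{\ell^*}$ by construction, so the deviation argument goes through exactly as you (and the paper) state it.
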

\begin{proof}
Take any Nash equilibrium ${\bf x}$ after applying the
rudimentary strategy at location $\ell^*$ using a target payoff of $\tau_{\ell^*}= \epsilon^*\cdot h_{\ell^*}$.
If $x_{\ell^*} = u_{\ell^*}$ then the resultant social welfare is at least
\begin{equation}\label{eq:case-1a}
u_{\ell^*}\cdot \tau_{\ell^*} 
\ =\  \epsilon^* \cdot u_{\ell^*}\cdot h_{\ell^*} 
\ =\  \frac{1}{\lambda}\cdot \epsilon \cdot u_{\ell^*}\cdot h_{\ell^*} 
\ =\  \epsilon \cdot {\tt opt}.
\end{equation}
On the other hand assume $x_{\ell^*} < u_{\ell^*}$.
Then by the Nash equilibrium conditions every agent has utility
at least $\tau_{\ell^*}$ otherwise it would improve its utility by moving to location $\ell^*$.
Thus the social welfare is at least
\begin{equation}\label{eq:case-1b}
N\cdot \tau_{\ell^*} 
\ =\ N\cdot \epsilon^* \cdot h_{\ell^*} 
\ =\ N\cdot \frac{1}{\lambda}\cdot \epsilon \cdot h_{\ell^*} 
\ \ge\ \frac{1}{\lambda}\cdot \epsilon \cdot u_{\ell^*}\cdot h_{\ell^*} 
\ =\ \epsilon \cdot {\tt opt}.
\end{equation}
Above, the inequality holds because $N\ge  \max_{i=1}^n u_i \ge u_{\ell^*}$.

It follows by (\ref{eq:case-1a}) and (\ref{eq:case-1b}) that the Nash
equilibrium has social welfare at least $\epsilon \cdot {\tt opt}$.
What is the cost incurred in obtaining this welfare guarantee?
By Lemma~\ref{lem:cost}, the cost incurred is at most
\begin{eqnarray}
\frac{(\epsilon^*)^2}{2} \cdot u_{\ell^*}\cdot h_{\ell^*}
&=& \frac{\epsilon^2}{2\lambda^2} \cdot u_{\ell^*}\cdot h_{\ell^*} \nonumber\\
&=& \frac{\epsilon^2}{2\lambda} \cdot {\tt opt} \nonumber\\
&\le& \frac{\epsilon^2}{2\phi} \cdot {\tt opt}\nonumber\\
&=& \frac{1+\phi}{2}\cdot \epsilon^2 \cdot {\tt opt}.
\end{eqnarray}
Here the final inequality holds as $\phi$ is the golden ratio conjugate and, thus, $\phi(1+\phi)=1$.

Thus we obtain a solution of utility at least $\epsilon \cdot {\tt opt}$ for a cost of just at most 
$\frac{1+\phi}{2}\cdot \epsilon^2 \cdot {\tt opt}\approx 0.81 \cdot \epsilon^2 \cdot {\tt opt}$.
\end{proof}

The selection of which locations to improve 
is much more subtle for the case in which there are no critical locations.

\noindent (2) There are no critical locations.

It follows that every location $i$ has the property that $u_{i}\cdot h_{i} < \phi\cdot {\tt opt}$.
Order the locations such that $h_1\ge h_2\ge \cdots \ge h_n$.
Let 
$$k^* = \min_k \  \{k: \sum_{i=1}^k u_i\cdot h_i \ge {\tt opt}\}.$$
Observe that $k^*$ exists as ${\tt opt} \le \sum_{i=1}^n u_i\cdot h_i$.
Let $\lambda$ be such that 
\begin{equation}\label{eq:lambda}
\sum_{i=1}^{k^*} u_i\cdot h_i  \ = \  \lambda\cdot {\tt opt}.
\end{equation}
Observe that $1\le \lambda$ and 
\begin{align}\label{eq:lambda-upper}
    \lambda< 1+\phi,
\end{align} which follows as 
$\sum_{i=1}^{k^*-1} u_i\cdot h_i < {\tt opt}$ and, by case assumption, $u_{k^*}\cdot h_{k^*} < \phi\cdot {\tt opt}$.

Our strategy now is to only improve locations $1$ through to $k^*$. No improvements are made in
locations $k^*+1$ through to $n$.
Specifically, we apply the rudimentary strategy
with a target of $\tau_i=\epsilon \cdot h_i$ on 
location $i$, for each $1\le i \le k^*$. 

\begin{lemma}\label{lem:case2}
If there are no critical location then the total improvement cost is at most $0.81 \cdot \epsilon^2 \cdot {\tt opt}$ and the social welfare of
any resultant equilibrium is at least $\epsilon \cdot {\tt opt}$.
\end{lemma}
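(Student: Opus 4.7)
The plan is to bound the cost and the equilibrium welfare separately, and to split the welfare argument according to which of the improved locations $1,\dots,k^*$ are at capacity at the Nash equilibrium under consideration.

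The cost bound follows immediately from Lemma~\ref{lem:cost}. Summing over the $k^*$ improved locations with target $\tau_i=\epsilon h_i$ gives a total upgrade cost of at most $\sum_{i=1}^{k^*}\frac{\epsilon^2}{2}u_i h_i=\frac{\epsilon^2}{2}\lambda\cdot {\tt opt}$ by (\ref{eq:lambda}); the inequality $\lambda<1+\phi$ from (\ref{eq:lambda-upper}) then yields a total cost strictly below $\frac{1+\phi}{2}\epsilon^2\cdot {\tt opt}\approx 0.809\,\epsilon^2\cdot {\tt opt}$, which is at most $0.81\,\epsilon^2\cdot {\tt opt}$.

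For the welfare bound I would fix any Nash equilibrium ${\bf x}$ after the improvements and exploit two structural facts: (i) the rudimentary strategy guarantees $g_i(x)\ge\tau_i=\epsilon h_i$ for every $x\in[0,u_i]$ and every $i\le k^*$; and (ii) the Nash condition forces $g_j(x_j)\ge g_i(x_i)$ whenever $x_j>0$ and $x_i<u_i$. Let $i^*$ be the smallest index in $\{1,\dots,k^*\}$ with $x_{i^*}<u_{i^*}$, if any such index exists. If none exists, every improved location is full and the welfare restricted to just these locations is already at least $\sum_{j\le k^*}u_j\cdot\epsilon h_j=\epsilon\lambda\cdot {\tt opt}\ge\epsilon\cdot {\tt opt}$. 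Otherwise, agents at the full locations $j<i^*$ each enjoy utility at least $\epsilon h_j$ by~(i), while by~(ii) every other agent has utility at least $g_{i^*}(x_{i^*})\ge\epsilon h_{i^*}$. Assembling these contributions, the total welfare is at least $\epsilon\cdot E(i^*)$, where I set $E(i)=\sum_{j<i}u_j h_j+\bigl(N-\sum_{j<i}u_j\bigr)h_i$.

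The crux is to show $E(i^*)\ge {\tt opt}$. Let $K$ be the smallest index with $\sum_{j\le K}u_j\ge N$; a direct calculation gives $E(i+1)-E(i)=(h_i-h_{i+1})\bigl(\sum_{j\le i}u_j-N\bigr)$, so $E$ is non-increasing on $\{1,\dots,K\}$ and non-decreasing on $\{K,\dots,n\}$, and in particular is minimised at $K$. A second observation is that $E(K)$ equals the maximum of $\sum_j x_j h_j$ over feasible allocations, so the pointwise bound $f_j\le h_j$ gives ${\tt opt}\le E(K)$. If the equilibrium has an agent at some location $\ge i^*$, then $\sum_{j<i^*}u_j<N$ forces $i^*\le K$, yielding $E(i^*)\ge E(K)\ge {\tt opt}$; otherwise $\sum_{j<i^*}u_j=N$, forcing $i^*>K$, and the welfare is at least $\epsilon\sum_{j<i^*}u_j h_j\ge\epsilon\sum_{j\le K}u_j h_j\ge\epsilon\cdot E(K)\ge\epsilon\cdot {\tt opt}$. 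The main obstacle I expect is precisely identifying this witness quantity $E(i)$ and its unimodal behaviour about the pivot $K$; once in place, whichever $i^*$ the Nash dynamics produce lies in the favourable regime, and everything else is routine summation.
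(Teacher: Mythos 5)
Your proposal is correct, and its skeleton matches the paper's: the cost bound is identical (sum Lemma~\ref{lem:cost} over the $k^*$ improved locations and invoke \eqref{eq:lambda} and \eqref{eq:lambda-upper}), and the welfare argument uses the same decomposition at the first non-full improved location, combining the full prefix with the Nash deviation bound to get welfare at least $\epsilon\cdot E(j^*)$. Where you genuinely diverge is in certifying $E(j^*)\ge {\tt opt}$. The paper telescopes from $j^*$ to $k^*$, asserting $E(j^*)\ge E(k^*)$ ``because the $h_i$ are decreasing'' and then $E(k^*)\ge {\tt opt}$; as your own difference formula $E(i+1)-E(i)=(h_i-h_{i+1})\bigl(\sum_{j\le i}u_j-N\bigr)$ makes plain, the monotonicity of $E$ between $j^*$ and $k^*$ actually depends on how the prefix capacities compare with $N$, not merely on the ordering of the $h_i$, so the paper's intermediate step is loose as written. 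You instead introduce the capacity pivot $K$, observe that $E$ is unimodal with minimum at $K$, identify $E(K)$ as the value of the greedy (rearrangement-optimal) allocation so that ${\tt opt}\le E(K)$, and then use the fact that the full prefix forces $j^*$ into the favourable regime relative to $K$. This buys a fully rigorous and somewhat more illuminating argument (it explains \emph{why} the relevant quantity dominates ${\tt opt}$), at the price of one extra definition and the small additional sub-case $\sum_{j<i^*}u_j=N$, which you handle correctly. Both routes yield the same constants, so nothing is lost quantitatively.
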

\begin{proof}
Let ${\bf x}$ be a resultant Nash equilibrium.
Let $j^* \le k^*$ be the first location such that $x_{j^*} < u_{j^*}$, if such a location exists. 

If no such location exists then set $j^*=k^*+1$.
Because locations $1$ to $k^*$ are full,
the total utility of the Nash equilibrium is at least
\begin{eqnarray}\label{eq:case-2a}
\sum_{i=1}^{k^*} u_i\cdot \tau_i 
&=& \epsilon \cdot \sum_{i=1}^{k^*} u_i\cdot h_i\nonumber \\
&\ge&  \epsilon \cdot {\tt opt} .
\end{eqnarray}

On the other hand, assume that $j^* \le k^*$.
Then at this Nash equilibrium, every agent at a location $i> j^*$
has a utility of at least at least $\tau_{j^*}$, otherwise it would
move to location $j^*$.

Again, because locations $1$ to $j^*-1$ are full, the total utility of the Nash equilibrium is then at least
\begin{eqnarray}\label{eq:case-2b}
\sum_{i=1}^{j^*-1} u_i\cdot \tau_i + (N-\sum_{i=1}^{j^*-1} u_i)\cdot \tau_{j^*}
&=& 
\epsilon \cdot \left( \sum_{i=1}^{j^*-1} u_i\cdot h_i + (N-\sum_{i=1}^{j^*-1} u_i)\cdot h_{j^*}\right) \nonumber\\
&\ge& 
\epsilon \cdot \left( \sum_{i=1}^{k^*-1} u_i\cdot h_i + (N-\sum_{i=1}^{k^*-1} u_i)\cdot h_{k^*}\right)\nonumber\\
&\ge& \epsilon\cdot {\tt opt} .
\end{eqnarray}
Above, the first inequality holds because the $h_i$ are in decreasing order. 
Furthermore, since the $h_i$ are decreasing, it must be the case that
${\tt opt}\le \sum_{i=1}^{k^*-1} u_i\cdot h_i + (N-\sum_{i=1}^{k^*-1} u_i)\cdot h_{k^*}$, giving the second inequality.

It follows by (\ref{eq:case-2a}) and (\ref{eq:case-2b}) that the Nash
equilibrium has social welfare at least $\epsilon \cdot {\tt opt}$.
What is the cost incurred in obtaining this welfare guarantee?
Recall we are only making improvements for locations $1$ to $k^*$
(i.e. using using $g_i$ instead of $f_i$, for $1\le i \le k^*$).
For locations $k^*+1$ to $n$ no improvements are made (i.e. we continue to use $f_i$, for $k^*+1\le i \le n$) and
thus no costs incurred on those locations.

Thus, by (\ref{eq:improve-both}),  the cost of our improvements on locations $1$ to $k^*$ is at most
\begin{eqnarray}\label{eq:cost}
\sum_{i=1}^{k^*} \frac{\epsilon^2}{2}\cdot u_i\cdot h_i 
&=& \frac{\epsilon^2}{2}\cdot \sum_{i=1}^{k^*} u_i\cdot h_i 
\nonumber\\
&=& \frac{\epsilon^2}{2} \cdot \lambda\cdot{\tt opt} \nonumber \\
&\le& \frac{1+\phi}{2}\cdot \epsilon^2 \cdot {\tt opt} .
\end{eqnarray}
Here the equality follows by \eqref{eq:lambda} and the inequality by \eqref{eq:lambda-upper}.

Thus, again, we obtain a solution of utility at least $\epsilon \cdot {\tt opt}$ for a cost of just at most 
$\frac{1+\phi}{2}\cdot \epsilon^2 \cdot {\tt opt}\approx 0.81 \cdot \epsilon^2 \cdot {\tt opt}$.
\end{proof}
This leads to our main result.
\begin{proof}[Proof of Theorem~\ref{thm:main}]
The theorem is an immediate consequence of Lemma~\ref{lem:case1} and
Lemma~\ref{lem:case2}.
\end{proof}

\section{An Optimal Guarantee}\label{sec:lower}

The reader may ask whether the guarantee provided by Theorem~\ref{thm:main} can be improved. The answer is no:
the theorem gives the strongest achievable
guarantee. Specifically, if we wish to ensure every resultant equilibrium has social welfare at least $\epsilon\cdot {\tt opt}$ then it is necessary to make investments of total cost $\Omega(\epsilon^2\cdot {\tt opt})$. Thus, the $\epsilon^2$ term is necessary and Theorem~\ref{thm:main} is tight to within a
constant factor. This is specified in the following theorem.
\begin{restatetheorem}{thm:minor}
There exist urban planning instances
for which ensuring every resulting equilibrium with welfare at least $\epsilon\cdot {\tt opt}$ requires targeted investments of total cost at least $0.12\epsilon^2\cdot {\tt opt}$.
\end{restatetheorem}
\begin{proof}
To prove the theorem it suffices to present one example with this property. To do this, we construct an instance with $n=2$ locations, each with capacities $u_1=u_2=1$, and where the number of agents is $N=1$. Let $f_1$ be a piecewise linear function through the two points
$(0,1)$ and $(1,0)$. Thus $h_1=1$.
Let $f_2$ be a constant function through the two points
$(0,0)$ and $(1,0)$. Thus $h_2=0$.
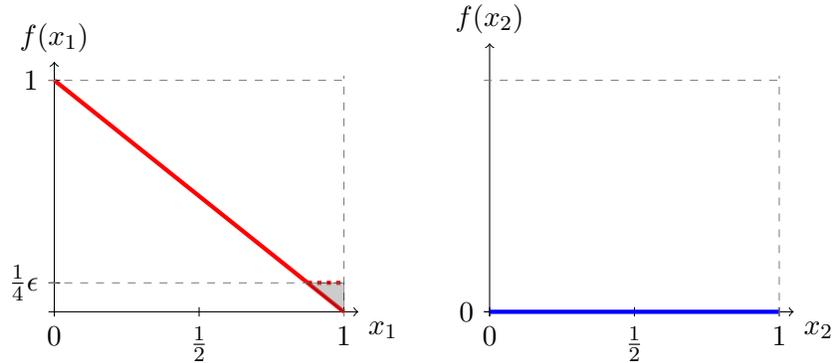
\begin{figure}[h!]
\centering
\newcommand{\dd}{0.04}
\newcommand{\ee}{0.5}
\begin{tikzpicture}[scale=0.77,xscale=5,yscale=4,domain=0:1]
  \draw[->] (-0.02,0) -- (1.05,0) node[below right] {$x_1$};
  \draw[->] (0,-0.02) -- (0,1.08) node[above] {$f(x_1)$};

  \draw (0,0.02) -- (0,-0.02) node[below] {$0$};
  \draw (0.5,0.02) -- (0.5,-0.02) node[below] {$\frac{1}{2}$};
  \draw (1,0.02) -- (1,-0.02) node[below] {$1$};
  
  \draw[color=red, ultra thick]   (0,1) -- (1,0);
  \draw (0.02,1) -- (-0.02, 1) node[left]{$1$};

  \draw[color=red,ultra thick,dotted]   (1-\ee/4,\ee/4) -- (1,\ee/4);
  \draw (0.02,\ee/4) -- (-0.02, \ee/4) node[left]{$\frac14\epsilon$};

  \draw[thin, dashed, color=gray] (1,-0.02) -- (1,1.02);
  \draw[thin, dashed, color=gray] (-0.02,1.0) -- (1.02,1.0);
  \draw[thin, dashed, color=gray] (-0.02,\ee/4) -- (1.02,\ee/4); 

    \draw [fill=gray, opacity=0.4]
       (1-\ee/4,\ee/4) -- (1,\ee/4) -- (1,0) -- cycle;
\end{tikzpicture}
\quad
\begin{tikzpicture}[scale=0.77,xscale=5,yscale=4,domain=0:1]
  \draw[->] (-0.02,0) -- (1.05,0) node[below right] {$x_2$};
  \draw[->] (0,-0.02) -- (0,1.08+2*\dd) node[above] {$f(x_2)$};

  \draw (0,0.02) -- (0,-0.02) node[below] {$0$};
  \draw (0.5,0.02) -- (0.5,-0.02) node[below] {$\frac{1}{2}$};
  \draw (1,0.02) -- (1,-0.02) node[below] {$1$};

  \draw (0.02,0) -- (-0.02, 0) node[left]{$0$};

  \draw[color=blue,ultra thick]   (0,0) -- (1,0);

  \draw[thin, dashed, color=gray] (1,-0.02) -- (1,1.02);
  \draw[thin, dashed, color=gray] (-0.02,1) -- (1.02,1); 
\end{tikzpicture}
\caption{Lower bound example}\label{fig:lower-bound}
\end{figure}

Setting $x_1=\lambda$ and $x_2=1-\lambda$ gives a social welfare
of $\lambda(1-\lambda)$ which is maximized when $\lambda=\frac12$.
Thus ${\tt opt}=\frac14$.

Now assume we want to create a Nash equilibrium with social welfare
$\epsilon\cdot {\tt opt}= \frac14 \epsilon$.
To do this, the reader may observe that it is more expensive to improve location $2$ than location $1$. So the cheapest solution is to only improve location $1$. To do this we set:
$$
g_1(x) =
\begin{cases}
f_i(x) &\text{if } 0 \le x \le 1-\frac14 \epsilon\\
\frac14 \epsilon &\text{if } 1-\frac14 \epsilon < x \le 1
\end{cases}
$$

Observe there is now a unique Nash equilibrium in which every agent moves to location $1$ and each agent receives a payoff of
$\frac14 \epsilon$. Since there are $N=1$ units of of agents
this gives a social welfare of $\epsilon\cdot {\tt opt}$ as desired.

The total cost of improving $f_1$ to $g_1$ is 
$$\frac12 \left(\frac14 \epsilon\right)^2 = \frac{1}{32}\epsilon^2 = \frac{1}{8}\epsilon^2\cdot {\tt opt}.$$

Hence the optimal solution has minimum cost $\Omega(\epsilon^2\cdot {\tt opt})$.
\end{proof}

Observe that the multiplicative constants in Theorems~\ref{thm:main} and~\ref{thm:minor} are 
$0.81$ and $\frac18$, respectively. It is an interesting open problem to reduce the gap between these upper and lower bounds. More optimistically, is it possible to close the gap completely to obtain the exact multiplicative constant for our supra-positive results?

\section{Conclusion}\label{sec:conc}

Using a generalized version of Schelling's {\em bounded neighbourhood model}~\cite{schelling1971dynamic}, we
proved that either the current neighbourhood plan already proffers social welfare comparable to the optimal social welfare achievable or there exist
targeted neighbourhood investments that produce supra-positive returns. This result serves as a theoretic ``proof of concept''. 
This immediately induces two critical questions.
First, does this result extend to the more realistic models of urban planning? Second, can this result be verified in experimental settings?

\bibliography{bib}

\section*{Appendix}

Here in the appendix we give the formal details underlying the
example of the donut effect described in Section~\ref{sec:case-study}.
The example consists of 1 red large inner-city neighbourhood with capacity 1000
and 10 blue small suburban neighbourhoods each with capacity 100. 
(For motivation, the reader may assume the units are measured in thousands
so the inner city has a maximum population capacity of one million,
and the ten suburbs have a combined maximum total population capacity of one million.)
The utility functions for the red neighbourhood and for the blue neighbourhoods are:

\noindent
\begin{minipage}{.43\linewidth}
$$
f_R(x)
=\begin{cases}
			25 + \frac{3}{20} x & \text{if } 0 \le x \le 500\\
            200-\frac{1}{5}x & \text{if } 500 \le x \le 1000
		 \end{cases}
$$
\end{minipage}
\quad
\begin{minipage}{.45\linewidth}
$$
f_B(x)
=\begin{cases}
			26+ \frac{3}{20} x & \text{if } 0 \le x \le 50\\
            	101 - \frac{43}{31} (x-50) & \text{if } 50 \le x \le 81\\
            301-3x & \text{if } 81 \le x \le 100
		 \end{cases}
$$
\end{minipage}

These utility functions are illustrated in Figure~\ref{fig:urban-flight}.
Observe that the neighbourhoods are
similar except the blue neighbourhoods have a kinked downward slope.
Note that the kink is at the point $(81,58)$.
\begin{figure}[h!]
\centering
\newcommand{\dd}{0.01}
\begin{tikzpicture}[scale=0.77,xscale=5,yscale=4,domain=0:1]
  \draw[->] (-0.02,0) -- (1.05,0) node[right] {$x$};
  \draw[->] (0,-0.02) -- (0,1.08+2*\dd) node[above] {$f_R(x)$};

  \draw (0,0.02) -- (0,-0.02) node[below] {$0$};
  \draw (0.5,0.02) -- (0.5,-0.02) node[below] {$500$};
  \draw (1,0.02) -- (1,-0.02) node[below] {$1000$};
  
  \draw[color=red, ultra thick]   (0,0.25) -- (0.5,1);
  \draw[color=red, ultra thick]   (0.5,1) -- (1,0);
    \draw (0.02,0.25) -- (-0.02, 0.25) node[above left = -2mm and 0mm]{$25$};
  \draw (0.02,1) -- (-0.02, 1) node[left]{100};

  \draw[thin, dashed, color=gray] (0.5,-0.02) -- (0.5,1.02+\dd);
  \draw[thin, dashed, color=gray] (1,-0.02) -- (1,1);
  \draw[thin, dashed, color=gray] (-0.02,1) -- (1.02,1);
\end{tikzpicture}
\quad
\begin{tikzpicture}[scale=0.77,xscale=5,yscale=4,domain=0:1]
  \draw[->] (-0.02,0) -- (1.05,0) node[right] {$x$};
  \draw[->] (0,-0.02) -- (0,1.08+2*\dd) node[above] {$f_B(x)$};

  \draw (0,0.02) -- (0,-0.02) node[below] {$0$};
  \draw (0.5,0.02) -- (0.5,-0.02) node[below] {$50$};
   \draw (0.81,0.02) -- (0.81,-0.02) node[below] {$81$};
  \draw (1,0.02) -- (1,-0.02) node[below] {$100$};

  \draw (0.02,\dd) -- (-0.02, \dd) node[left]{$1$};
    \draw (0.02,0.25+\dd) -- (-0.02, 0.25+\dd) node[left]{$26$};
  \draw (0.02,1+\dd) -- (-0.02, 1+\dd) node[left]{$101$};

  \draw[color=blue,ultra thick]   (0,0.25+\dd) -- (0.5,1+\dd);
  \draw[color=blue, ultra thick]   (0.5,1+\dd) -- (0.81, 0.57+\dd) --(1,\dd);

  \draw[thin, dashed, color=gray] (0.5,-0.02) -- (0.5,1.02+\dd);
  \draw[thin, dashed, color=gray] (1,-0.02) -- (1,1.02+\dd);
  \draw[thin, dashed, color=gray] (-0.02,1+\dd) -- (1.02,1+\dd); 
    \draw[thin, dashed, color=gray] (0.81,0) -- (0.81,0.57+\dd); 
\end{tikzpicture}
\caption{An instance with urban flight.}\label{fig:urban-flight}
\end{figure}
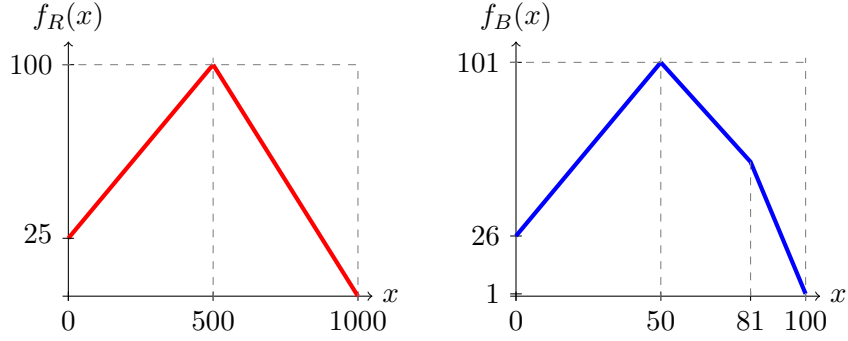

Assume the total population of the city is $1000$ units (i.e. one million people), and let's assume that the inner-city and suburban neighbourhoods
all begin at $50\%$ capacity.
This example is troubling because a dynamic arises where agents in the 
inner-city continue to leave for the suburbs over time until the equilibrium is reached. In this unique equilibrium the inner-city is sparsely populated (at $16\%$ capacity) and the suburbs are densely populated (at $84\%$ capacity). This is despite the suburbs and inner-city being very similar in quality, as shown by their similar utility functions.

To prove this, assume that $10x$ units of agents are in the red neighbourhood and $100-x$ units of agents are in each blue 
neighbourhood; we remark that at any equilibrium the quantity in each blue neighbourhood will be the same.
Now, for $50\ge x\ge 19$, we have
\begin{eqnarray*}
f_B(100-x)-f_R(10x) &=& \left( 101 - \frac{43}{31} \cdot((100-x)-50)\right) - \left(25+ \frac{3}{20}\cdot 10x\right) \\
&=& 76 +\frac{43}{31}\cdot(50-x)-\frac32 x \\
&=& 145.355-2.887x\\
&\ge& 145.355-2.887\cdot 50\\
&>& 0
\end{eqnarray*}
Thus agents obtain a higher utility by moving from the inner-city red neighbourhood to the suburban blue neighbourhoods as long as $x\ge 19$.
In fact, this process continues until $x=16$. Specifically,
for $19\ge x> 16$, we have
\begin{eqnarray*}
f_B(100-x)-f_R(10x) &=& \left( 301 - 3\cdot(100-x)\right) - \left(25+ \frac{3}{20}\cdot 10x\right) \\
&=& \frac32 x -24\\
&>& \frac32 \cdot 16 -24\\
&=& 0
\end{eqnarray*}
Observe this process terminates at $x=16$ where we obtain the unique Nash equilibrium with $x_R=10\cdot 16=160$ units of population in the inner-city and $x_B=100-16=84$ units of population in each suburban neighbourhood.

This equilibrium indeed gives the donut picture shown
in~Figure~\ref{fig:city_initial}. How can we rectify this donut effect using a small targeted investment?
To do this consider what happens if the red inner-city neighbourhood is improved slightly to:
$$
g_R(x)
=\begin{cases}
			32 + \frac{69}{500} x & \text{if } 0 \le x \le 500\\
            202-\frac{101}{500}x & \text{if } 500 \le x \le 1000
		 \end{cases}
$$
This new utility function is show by the dashed line in Figure~\ref{fig:no-urban-flight}. The shaded grey area denotes the corresponding investment cost
for this improvement.
\begin{figure}[h!]
\centering
\newcommand{\dd}{0.01}
\begin{tikzpicture}[scale=0.77,xscale=5,yscale=4,domain=0:1]
  \draw[->] (-0.02,0) -- (1.05,0) node[right] {$x$};
  \draw[->] (0,-0.02) -- (0,1.08+2*\dd) node[above] {$g_R(x)$};

  \draw (0,0.02) -- (0,-0.02) node[below] {$0$};
  \draw (0.5,0.02) -- (0.5,-0.02) node[below] {$500$};
  \draw (1,0.02) -- (1,-0.02) node[below] {$1000$};
  
  \draw[color=red, thick]   (0,0.25) -- (0.5,1);
  \draw[color=red, thick]   (0.5,1) -- (1,0);
  \draw (0.02,0.25+7*\dd) -- (-0.02, 0.25+7*\dd) node[above left = -2mm and 0mm]{$32$};
  \draw (0.02,1+\dd) -- (-0.02, 1+\dd) node[above left = -1mm and 0mm]{$101$};

  \draw[color=red,ultra thick,dashed]   (0,0.25+7*\dd) -- (0.5,1+\dd);
  \draw[color=red,ultra thick,dashed]   (0.5,1+\dd) -- (1,0);

  \draw[thin, dashed, color=gray] (0.5,-0.02) -- (0.5,1.02+\dd);
  \draw[thin, dashed, color=gray] (1,-0.02) -- (1,1.02+\dd);
  \draw[thin, dashed, color=gray] (-0.02,1) -- (1.02,1);
  \draw[thin, dashed, color=gray] (-0.02,1+\dd) -- (1.02,1+\dd); 

    \draw [fill=gray, opacity=0.4]
       (0, 0.25+7*\dd) -- (0.5,1+\dd)  -- (1,0) -- (0.5, 1) -- (0,0.25) -- cycle;
\end{tikzpicture}
\quad
\begin{tikzpicture}[scale=0.77,xscale=5,yscale=4,domain=0:1]
  \draw[->] (-0.02,0) -- (1.05,0) node[right] {$x$};
  \draw[->] (0,-0.02) -- (0,1.08+2*\dd) node[above] {$f_B(x)$};

  \draw (0,0.02) -- (0,-0.02) node[below] {$0$};
  \draw (0.5,0.02) -- (0.5,-0.02) node[below] {$50$};
   \draw (0.81,0.02) -- (0.81,-0.02) node[below] {$81$};
  \draw (1,0.02) -- (1,-0.02) node[below] {$100$};

  \draw (0.02,0.25+\dd) -- (-0.02, 0.25+\dd) node[left]{$26$};
  \draw (0.02,1+\dd) -- (-0.02, 1+\dd) node[left]{$101$};

  \draw[color=blue,ultra thick]   (0,0.25+\dd) -- (0.5,1+\dd);
  \draw[color=blue, ultra thick]   (0.5,1+\dd) -- (0.81, 0.57+\dd) --(1,\dd);

  \draw[thin, dashed, color=gray] (0.5,-0.02) -- (0.5,1.02+\dd);
  \draw[thin, dashed, color=gray] (1,-0.02) -- (1,1.02+\dd);
  \draw[thin, dashed, color=gray] (-0.02,1+\dd) -- (1.02,1+\dd); 
    \draw[thin, dashed, color=gray] (0.81,0) -- (0.81,0.57+\dd); 
\end{tikzpicture}
\caption{Improving the Inner-City.}\label{fig:no-urban-flight}
\end{figure}
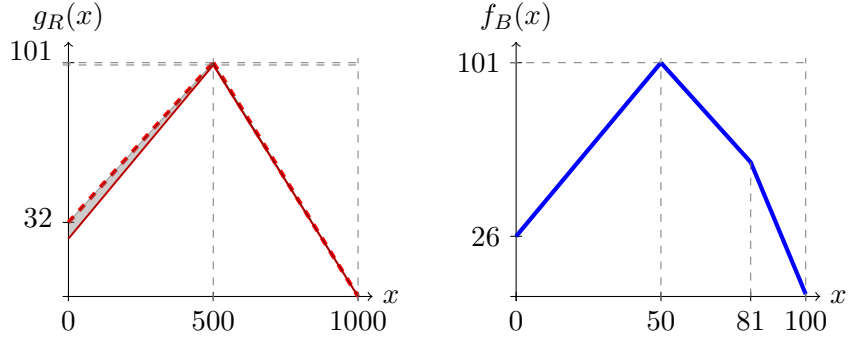

We claim the effect of this improvement is to create a reverse dynamic, with agents moving back from the suburbs to the inner-city. This dynamic
terminates when the inner-city neighbourhood and every suburban neighbourhood
are at exactly $50\%$ capacity.
To see this, consider first the case where $x\le 19$.
We then have
\begin{eqnarray*}
g_R(10x)-f_B(100-x) &=& \left( 32+\frac{69}{500}\cdot 10x\right) - \left(301-3\cdot (100-x)\right)\\
&=& 31-\frac{81}{50}x \\
&\ge& 31-\frac{81}{50}\cdot 19 \\
&>& 0
\end{eqnarray*}
It follows that while $x\le 19$, people will move back from the suburbs to the inner-city. What happens when $x\ge19$?
Then, for $19< x< 50$, we have
\begin{eqnarray*}
g_R(10x)-f_B(100-x) &=& \left( 32+\frac{69}{500}\cdot 10x\right) - \left( 101 - \frac{43}{31} \cdot((100-x)-50)\right)\\
&\ge& 0.354887 - 0.007097x \\
&\ge& 0.354887 - 0.007097\cdot 50\\
&>& 0
\end{eqnarray*}
This implies that agents still receive a higher utility by moving back from
the suburbs to the inner-city. This process continues until $x=50$ and we reach the unique Nash equilibrium where every neighbourhood is at exactly $50\%$ capacity. Thus the targeted investment caused a chain reaction 
that continues until an equal number of people live in the inner-city
as in total in the suburbs.
As claimed, the resultant equilibrium, shown in Figure~\ref{fig:city_improved}, has reversed the donut effect.

\end{document}